\newcommand{\isep}{\mathrel{{.}\,{.}}\nobreak}
\newtheorem{theorem}{Theorem}[section]
\newtheorem{lemma}[theorem]{Lemma}
\pgfplotsset{compat=newest}
\definecolor{Paired-1}{RGB}{31,120,180}
\definecolor{Paired-2}{RGB}{166,206,227}
\definecolor{Paired-3}{RGB}{51,160,44}
\definecolor{Paired-4}{RGB}{178,223,138}
\definecolor{Paired-5}{RGB}{227,26,28}
\definecolor{Paired-6}{RGB}{251,154,153}
\definecolor{Paired-7}{RGB}{255,127,0}
\definecolor{Paired-8}{RGB}{253,191,111}
\definecolor{Paired-9}{RGB}{106,61,154}
\definecolor{Paired-10}{RGB}{202,178,214}
\definecolor{Paired-11}{RGB}{177,89,40}
\definecolor{Paired-12}{RGB}{105,105,105}
\definecolor{Paired-13}{RGB}{80,80,80}
\newcommand{\furkan}[2]{\ifx&#2&{\leavevmode\color{magenta}#1}\else{\leavevmode\color{magenta}FURKAN\{}#1{\leavevmode\color{magenta}\}}\footnote{{\leavevmode\color{magenta}#2}}\PackageWarning{Furkan}{#1: #2}\fi}
\newcommand{\numberthis}{\refstepcounter{equation}\tag{\arabic{equation}}}
\newcommand{\labeln}[1]{\numberthis\label{#1}}
\newenvironment{customlegend}[1][]{%
    \begingroup
    \csname pgfplots@init@cleared@structures\endcsname
    \pgfplotsset{#1}%
}{%
    \csname pgfplots@createlegend\endcsname
    \endgroup
}%
\def\addlegendimage{\csname pgfplots@addlegendimage\endcsname}
\begin{document}
%
\title{High-Throughput and Energy-Efficient VLSI Architecture for Ordered Reliability Bits GRAND}
%
%
%

\author{Syed~Mohsin~Abbas,~Thibaud~Tonnellier,~Furkan~Ercan,~\IEEEmembership{Member,~IEEE}~\\Marwan~Jalaleddine~and~Warren~J.~Gross,~\IEEEmembership{Senior Member,~IEEE}
\thanks{S. M. Abbas, T. Tonnellier, M. Jalaleddine and W. J. Gross are with the Department of
Electrical and Computer Engineering, McGill University, Montr{\'e}al, Qu{\'e}bec,
Canada. F. Ercan is affiliated with the Department of Electrical and Computer Engineering, Boston University, Boston, MA, 02115. ~(email:~syed.abbas@mail.mcgill.ca,~thibaud.tonnellier@mcgill.ca,
~fercan@bu.edu,~marwan.jalaleddine@mail.mcgill.ca,~warren.gross@mcgill.ca.).~A part of this work has been presented in ICASSP 2021 \cite{ORBGRAND-VLSI}.
}}

\maketitle

\begin{abstract}

Ultra-reliable low-latency communication (URLLC), a major 5G New-Radio use case, is the key enabler for applications with strict reliability and latency requirements. These applications necessitate the use of short-length and high-rate channel codes. Guessing Random Additive Noise Decoding (GRAND) is a recently proposed Maximum Likelihood (ML) decoding technique for these short-length and high-rate codes. Rather than decoding the received vector, GRAND tries to infer the noise that corrupted the transmitted codeword during transmission through the communication channel. As a result, GRAND can decode any code, structured or unstructured. GRAND has hard-input as well as soft-input variants. Among these variants, Ordered Reliability Bits GRAND (ORBGRAND) is a soft-input variant that outperforms hard-input GRAND and is suitable for parallel hardware implementation. This work reports the first hardware architecture for ORBGRAND, which achieves an average throughput of up to $42.5$ Gbps for a code length of $128$ at a target FER of $10^{-7}$. Furthermore, the proposed hardware can be used to decode any code as long as the length and rate constraints are met. In comparison to the GRANDAB, a hard-input variant of GRAND, the proposed architecture enhances decoding performance by at least $2$ dB. When compared to the state-of-the-art fast dynamic successive cancellation flip decoder (Fast-DSCF) using a 5G polar code $(128,105)$, the proposed ORBGRAND VLSI implementation has $49\times$ higher average throughput, $32\times$ times more energy efficiency, and $5\times$ more area efficiency while maintaining similar decoding performance.

\end{abstract}

\begin{IEEEkeywords}
Area efficiency, Energy efficiency, Error Correcting Code (ECC), Guessing Random Additive Noise Decoding (GRAND), Maximum Likelihood Decoding (MLD), Ordered Reliability Bits GRAND (ORBGRAND), VLSI architecture
\end{IEEEkeywords}

%
\IEEEpeerreviewmaketitle

\begin{figure}[!t]
  \centering
  \begin{tikzpicture}[]
    \begin{groupplot}[group style={group name=fer_queries, group size= 2 by 1, horizontal sep=5pt, vertical sep=5pt},
      footnotesize,
      height=.6\columnwidth,  width=0.55\columnwidth,
      xlabel=$\frac{E_b}{N_0}$ (dB),
      xmin=0, xmax=8, xtick={0,1,...,7},
      ymode=log,
      tick align=inside,
      grid=both, grid style={gray!30},
      /pgfplots/table/ignore chars={|},
      ]

      \nextgroupplot[ylabel= FER, ytick pos=left, y label style={at={(axis description cs:-0.225,.5)},anchor=south},ymin=3e-8, ymax = 2]

      \addplot[mark=diamond*       , Paired-7 , semithick]  table[x=Eb/N0, y=FER] {data_EbNo/Polar/128_105/GRANDAB.txt};\label{gp:plot1_polar}
        \addplot[mark=square  , Paired-3 , semithick]  table[x=Eb/N0, y=FER] {data_EbNo/Polar/128_105/ORBGRAND_LW8256.txt}; \label{gp:plot2_polar}
         \addplot[mark=pentagon       , Paired-5 , thick]  table[x=Eb/N0, y=FER] {data_EbNo/Polar/128_105/DSCF_N128_K105_C11_w2_Tmax50.txt}; \label{gp:plot5_polar}
         \addplot[mark= o      , Paired-1 , semithick]  table[x=Eb/N0, y=FER] {data_EbNo/Polar/128_105/POLAR_N128_105_OSD.txt}; \label{gp:plot6_polar}
         \addplot[mark=halfcircle*  , Paired-9, semithick]  table[x=Eb/N0, y=FER] {data_EbNo/Polar/128_105/SCL_L32.txt}; \label{gp:plot7_polar}
         \addplot[mark=+,mark options={scale=1.5}  , Paired-11, semithick]  table[x=Eb/N0, y=FER] {data_EbNo/Polar/128_105/SGRAND.txt}; \label{gp:plot8_polar}

      \coordinate (top) at (rel axis cs:0,1);


      \nextgroupplot[ylabel= FER, ytick pos=right,y label style={at={(axis description cs:1.340,.5)},anchor=south},ymin=3e-8, ymax = 2]
      \addplot[mark=diamond*       , Paired-7 , semithick]  table[x=Eb/N0, y=FER] {data_EbNo/Polar/128_99/GRANDAB.txt};
        \addplot[mark=square  , Paired-3 , semithick]  table[x=Eb/N0, y=FER] {data_EbNo/Polar/128_99/ORBGRAND_LW8256.txt}; 
         \addplot[mark=pentagon       , Paired-5 , thick]  table[x=Eb/N0, y=FER] {data_EbNo/Polar/128_99/DSCF_N128_K99_C11_w2_Tmax50.txt};
         \addplot[mark= o      , Paired-1 , semithick]  table[x=Eb/N0, y=FER] {data_EbNo/Polar/128_99/POLAR_N128_99_OSD.txt};
         \addplot[mark=halfcircle*  , Paired-9, semithick]  table[x=Eb/N0, y=FER] {data_EbNo/Polar/128_99/SCL_L32.txt};
         \addplot[mark=+,mark options={scale=1.5}  , Paired-11, semithick]  table[x=Eb/N0, y=FER] {data_EbNo/Polar/128_99/SGRAND.txt};
      \coordinate (bot) at (rel axis cs:1,0);
    \end{groupplot}
    \node[below = 1cm of fer_queries c1r1.south] {\footnotesize (a) : PC (128, 105+11)};
    \node[below = 1cm of fer_queries c2r1.south] {\footnotesize (b) : PC (128, 99+11)};
    \path (top|-current bounding box.north) -- coordinate(legendpos) (bot|-current bounding box.north);
    \matrix[
    matrix of nodes,
    anchor=south,
    draw,
    inner sep=0.2em,
    draw
    ]at(legendpos)
    {
      \ref{gp:plot1_polar}& \footnotesize GRANDAB  &[1pt]
      \ref{gp:plot2_polar}& \footnotesize ORBGRAND  &[1pt]
      \ref{gp:plot5_polar}& \footnotesize DSCF \\
      \ref{gp:plot6_polar}& \footnotesize OSD  &[1pt]
      \ref{gp:plot7_polar}& \footnotesize CA-SCL (L = 32) &[1pt]
      \ref{gp:plot8_polar}& \footnotesize SGRAND \\
       }; 
  \end{tikzpicture}
  \caption{\label{fig:fer_polar_ini} Comparison of the decoding performance of different GRAND variants with OSD ($\text{Order}=2$), CA-SCL and DSCF ($\omega=2$, $T_\text{max}=50$) decoder for 5G-NR Polar Codes.}
\end{figure}
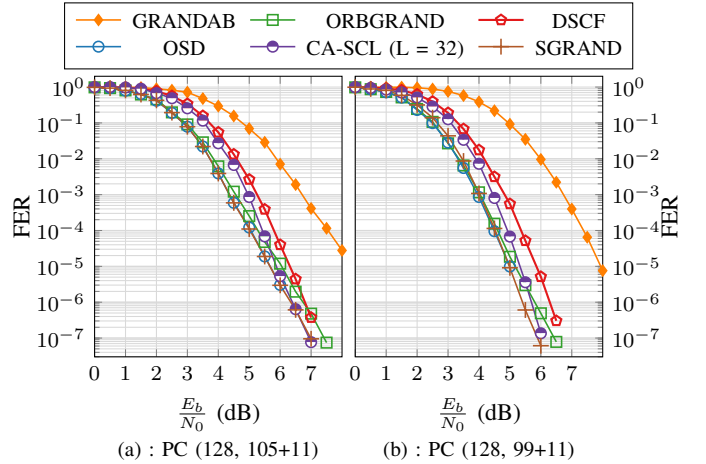

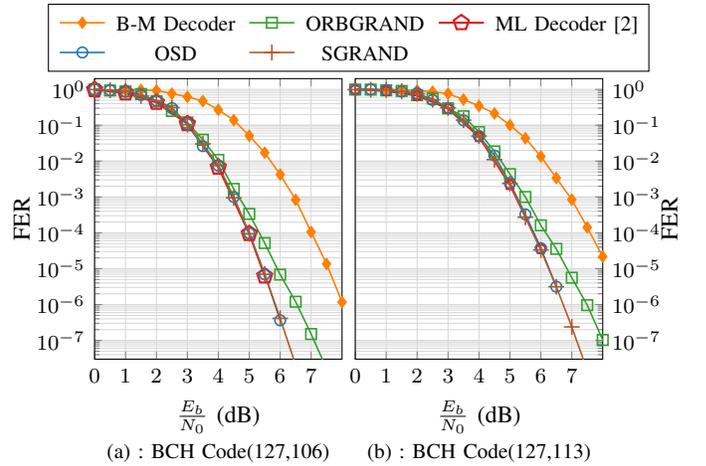
\begin{figure}[!t]
  \centering
  \begin{tikzpicture}[]
    \begin{groupplot}[group style={group name=fer_queries, group size= 2 by 1, horizontal sep=5pt, vertical sep=5pt},
      footnotesize,
      height=.6\columnwidth,  width=0.55\columnwidth,
      xlabel=$\frac{E_b}{N_0}$ (dB),
      xmin=0, xmax=8, xtick={0,1,...,7},
      ymode=log,
      tick align=inside,
      grid=both, grid style={gray!30},
      /pgfplots/table/ignore chars={|},
      ]

      \nextgroupplot[ylabel= FER, ytick pos=left, y label style={at={(axis description cs:-0.225,.5)},anchor=south},ymin=3e-8, ymax = 2]

     \addplot[mark=diamond*       , Paired-7 , semithick]  table[x=Eb/N0, y=FER] {data_EbNo/BCH/127_106/BCH_N127_106_BM.txt};\label{gp:plot1_bch}
        \addplot[mark=square  , Paired-3 , semithick]  table[x=Eb/N0, y=FER] {data_EbNo/BCH/127_106/ORBGRAND_LW8128.txt}; \label{gp:plot2_bch}
         \addplot[mark=pentagon,mark options={scale=1.5}       , Paired-5 , thick]  table[x=Eb/N0, y=FER] {data_EbNo/BCH/127_106/BCH_N127_106_ML.txt}; \label{gp:plot5_bch}
         \addplot[mark= o,mark options={scale=1}      , Paired-1 , semithick]  table[x=Eb/N0, y=FER] {data_EbNo/BCH/127_106/BCH_N127_106_OSD.txt}; \label{gp:plot6_bch}
         \addplot[mark=+,mark options={scale=1.5}   , Paired-11, semithick]  table[x=Eb/N0, y=FER] {data_EbNo/BCH/127_106/BCH_N127_106_SGRAND.txt}; \label{gp:plot8_bch}

      \coordinate (top) at (rel axis cs:0,1);


      \nextgroupplot[ylabel= FER, ytick pos=right,y label style={at={(axis description cs:1.340,.5)},anchor=south},ymin=3e-8, ymax = 2]
      \addplot[mark=diamond*       , Paired-7 , semithick]  table[x=Eb/N0, y=FER] {data_EbNo/BCH/127_113/BCH_N127_113_BM.txt};
        \addplot[mark=square  , Paired-3 , semithick]  table[x=Eb/N0, y=FER] {data_EbNo/BCH/127_113/ORBGRAND_LW8128.txt}; 
         \addplot[mark=pentagon,       , Paired-5 , thick]  table[x=Eb/N0, y=FER] {data_EbNo/BCH/127_113/BCH_N127_113_ML.txt}; 
         \addplot[mark= o,mark options={scale=1}      , Paired-1 , semithick]  table[x=Eb/N0, y=FER] {data_EbNo/BCH/127_113/BCH_N127_113_OSD.txt}; 

         \addplot[mark=+,mark options={scale=1.5} , Paired-11, semithick]  table[x=Eb/N0, y=FER] {data_EbNo/BCH/127_113/BCH_N127_113_SGRAND.txt}; 

      \coordinate (bot) at (rel axis cs:1,0);
    \end{groupplot}
    \node[below = 1cm of fer_queries c1r1.south] {\footnotesize (a) : BCH Code(127,106)};
    \node[below = 1cm of fer_queries c2r1.south] {\footnotesize (b) : BCH Code(127,113)};
    \path (top|-current bounding box.north) -- coordinate(legendpos) (bot|-current bounding box.north);
    \matrix[
    matrix of nodes,
    anchor=south,
    draw,
    inner sep=0.2em,
    draw
    ]at(legendpos)
    {
      \ref{gp:plot1_bch}& \footnotesize B-M Decoder &[1pt]
      \ref{gp:plot2_bch}& \footnotesize ORBGRAND  &[1pt]
      \ref{gp:plot5_bch}& \footnotesize ML Decoder \cite{kaiserslautern} \\
      \ref{gp:plot6_bch}& \footnotesize OSD &[1pt]
      \ref{gp:plot8_bch}& \footnotesize SGRAND \\
       }; 
  \end{tikzpicture}
  \caption{\label{fig:fer_bch_ini} Comparison of the decoding performance of different GRAND variants with OSD ($\text{Order}=2$) and ML decoder for BCH Codes.}
\end{figure}

\section{Introduction}

\IEEEPARstart{F}{ollowing} Shannon's seminal 1948 paper \cite{Shannon48}, much work was directed toward developing practical coding schemes that could approach channel capacity (named ``the Shannon limit''). Early proposed channel coding techniques mainly focused on designing error correcting codes where the number of correctable errors is selected by design. This approach towards channel coding brought rise to Hamming codes \cite{Hamming50} and Bose–Chaudhuri–Hocquenghem (BCH) codes \cite{Hocquenghem59,Bose1960}. Recently, researchers have been trying to discover new capacity achieving and capacity-approaching codes. Most notably, Turbo codes \cite{turbo} and LDPC codes \cite{Gallager62}, were proposed over time as capacity-approaching codes. On the other hand, Polar codes \cite{Arikan09}, proposed in 2008, are the first proven class of codes that asymptotically reach the Shannon limit for binary-input symmetric memory-less channels, as well as discrete and continuous memory-less channels \cite{Arikan2}. However, designing codes that perform well in the short-to-medium block length regime is challenging~\cite{cocskun2019efficient}.

The Guessing Random Additive Noise Decoding (GRAND) \cite{Duffy19TIT} is a recently proposed Maximum Likelihood (ML) decoding technique for short-length and high-rate channel codes. Emerging applications such as intelligent transportation systems (ITS) \cite{URLLC1}, the internet of things (IoT) \cite{IoT1,IoT2}, augmented and virtual reality, machine to machine communication (M2M) \cite{URLLC3}, and ultra-reliable low-latency communication (URLLC) \cite{URLLC2} (5G New-Radio (NR) use case) necessitate short data packets with high reliability (target FER of $10^{-5}\sim10^{-9}$) and ultra low latency to deal with critical events. GRAND is a maximum likelihood decoder for these channel codes with short lengths and high rates. Furthermore, because GRAND focuses on noise, transmissions with low noise are quickly decoded. GRAND is therefore suitable for applications that require high reliability and ultra-low latency.

In addition to GRAND, other universal decoders for $(n,k)$ linear block codes, where $n$ is the code length and $k$ is the code dimension, include brute-force ML decoding, Ordered Statistic Decoding (OSD) \cite{Fossorier95} and its variants \cite{OSD3,OSD4,OSD5,OSD6}. Due to the high complexity requirement, the ML decoder is unsuitable for high-rate codes. Whereas the OSD and its variants permute the columns of the generator matrix ($\bm{G}$) of the underlying code in a way that is dependent on the received vector of channel observation values ($\bm{y}$), and the $\bm{G}$ matrix is transformed to a systematic form using Gaussian elimination. OSD is unsuitable for parallel hardware implementation \cite{Scholl13} due to the complexity of Gaussian elimination ($\mathcal{O}(n^3)$) and the reliance of column permutations of $\bm{G}$ on the received vector ($\bm{y}$). GRAND, on the other hand, provides a low-complexity ML decoding solution for short-length and high-rate codes because it does not require Gaussian elimination or $\bm{G}$ column permutations. GRAND instead requires simple bit-flipping and syndrome check (codebook membership verification) operations. 

GRAND generates test error patterns, which are successively applied to the vector of channel observation values ($\bm{y}$) and the resulting vector is evaluated for codebook membership. GRAND can be used with any codebook as long as there is a method for validating a vector's codebook membership. For linear codebooks ($\mathcal{C}$), the codebook membership for a vector can be verified using the parity check matrix $\bm{H}$. For other non-structured codebooks, stored in a dictionary, the codebook membership of a vector can be checked with a dictionary lookup. For the rest of the discussion, we restrict ourselves to $(n,k)$ linear block codes.

The order in which these test error patterns are generated is the primary distinction between GRAND variants. GRAND with ABandonment (GRANDAB) \cite{Duffy19TIT} is a hard decision input version of GRAND that produces test error patterns in increasing Hamming weight order up to weight $AB$. Ordered Reliability Bits GRAND (ORBGRAND) \cite{duffy2020ordered} and Soft GRAND (SGRAND) \cite{solomon2020soft} are soft-input variants that efficiently utilize soft information (channel observation values), resulting in improved decoding performance over hard-input GRANDAB. 

Figure \ref{fig:fer_polar_ini} compares the decoding performance of various GRAND variants for decoding 5G NR CRC-aided polar code (128,105+11) and polar code (128,99+11). Furthermore, the decoding performance of state-of-the-art soft-input decoders such as the CRC-Aided Successive Cancellation List (CA-SCL) decoder \cite{Tal15,LLR-List}, Dynamic SC-Flip (DSCF) \cite{Chandesris,Ercan_2020} decoder and OSD \cite{Fossorier95,OSD3,OSD4} is included for reference. The numerical simulation results presented in this work are based on BPSK modulation over an AWGN channel. The ORBGRAND and SGRAND soft-input decoders outperform the hard-input GRANDAB variant in decoding performance, and the SGRAND variant achieves ML decoding performance similar to OSD, as shown in Fig. \ref{fig:fer_polar_ini}. Figure \ref{fig:fer_bch_ini} compares the decoding performance of different GRAND variants with OSD and ML decoding of BCH codes $(127,106)$ and $(127,113)$, respectively. The ML decoding results are obtained from \cite{kaiserslautern}. Similar trends in decoding performance can be seen in the BCH codes depicted in Fig. \ref{fig:fer_bch_ini}, where soft-input variants of GRAND (ORBGRAND and SGRAND) outperform the hard-input traditional Berlekamp-Massey (B-M)  \cite{Berlekamp68,Massey69} decoder and the SGRAND achieves ML performance comparable to OSD.

The SGRAND outperforms the other GRAND variants in terms of decoding performance; however, SGRAND is not suitable for parallel hardware implementation. The generated test error patterns in SGRAND are interdependent, and their query order changes for each received vector of channel observation values ($\bm{y}$) \cite{solomon2020soft}. As a result, SGRAND does not lend itself to efficient parallel hardware implementation, and a sequential hardware implementation will result in a high decoding latency, which is unsuitable for applications that require ultra-low latency. The ORBGRAND, on the other hand, generates test error patterns in a predetermined logistic weight order based on integer partitioning. Furthermore, the test error patterns generated are mutually independent and can be generated in parallel. ORBGRAND is thus highly parallelizable and well suited to parallel hardware implementation.

In this paper, we investigate the parameters that affect the decoding performance as well as the computational complexity of the ORBGRAND algorithm, and we propose modifications to the ORBGRAND algorithm to aid hardware implementation and reduce complexity. In addition, we present a high-throughput and energy-efficient ORBGRAND VLSI architecture. The VLSI implemenation results show that, considering a code of length $128$ and a target FER of $10^{-7}$, the proposed harware architecture can achieve an average information throughput of up to $42.5$ Gbps. Furthermore, the proposed hardware can be used to decode any code as long as the length and rate constraints are met. When compared to the state-of-the-art fast dynamic successive cancellation flip decoder (Fast-DSCF) \cite{Chandesris,Ercan_2020} using a 5G NR polar code $(128,105)$, the proposed ORBGRAND implementation results in $49\times$ more average throughput, $32\times$ more energy efficiency, and $5\times$ more area efficiency.

It should be noted that a part of this work was previously discussed in \cite{ORBGRAND-VLSI}. This paper builds on the earlier work \cite{ORBGRAND-VLSI} and extends the proposed ORBGRAND in following ways:

\begin{itemize}
\item For various classes of channel codes, including Bose-Chaudhuri-Hocquenghem (BCH) codes, Cyclic Redundancy Check (CRC) codes, Random Linear Codes (RLCs), and Polar codes, the proposed ORBGRAND is evaluated in terms of decoding performance and compared with state-of-the-art decoding approaches.
\item The ORBGRAND VLSI architecture \cite{ORBGRAND-VLSI} can only support a limited set of parameters ($LW \leq 64$ and $P \leq 6$). The proposed ORBGRAND VLSI architecture is expanded to support the additional parameters ($LW \leq 96$ and $P \leq 8$) in this work. Furthermore, detailed VLSI implementation results for area, power, throughput, hardware efficiency, and energy efficiency are presented.
\item This paper describes the proposed test error pattern generation scheme for the ORBGRAND hardware and presents a step-by-step procedure for leveraging shift registers to generate error patterns corresponding to logistic weight order based on the integer partitioning procedure.
\item This paper provides a comprehensive analysis of worst-case latency and throughput for selecting different parameters for the proposed ORBGRAND hardware. This analysis assists in the selection of optimal ORBGRAND hardware parameters to meet the throughput and latency requirements of a target application.
\item For the proposed ORBGRAND VLSI hardware, we propose segmenting the sorter module into multiple partitions. The effect of partition number on the displacement of LLR elements $\vert\bm{y}_i\vert$ ($\forall i \in [1,n]$) from their correct locations and their effect on ORBGRAND decoding performance is thoroughly investigated. Finally, the ORBGRAND with segmented sorter approach is implemented and compared to the baseline ORBGRAND hardware with non-segmented sorter, and the effect of varying the number of sorter-segments on area overhead is investigated. 
\end{itemize}

The remainder of this work is structured as follows: Section II includes preliminary information on GRAND and ORBGRAND. Section III investigates ORBGRAND parameters and proposes modifications for simple hardware implementation as well as complexity reduction of ORBGRAND. The proposed hardware architecture for ORBGRAND is detailed in Section IV. Section V presents implementation results and compares them to the Fast-DSCF and GRANDAB decoders. Finally, in Section VI, concluding remarks are made.  

\section{Preliminaries}
\subsection{Notations}
Matrices are denoted by a bold upper-case letter ($\bm{M}$), while vectors are denoted with bold lower-case letters ($\bm{v}$).
The transpose operator is represented by $^\top$. The number of $k$-combinations from a given set of $n$ elements is noted by $\binom{n}{k}$. $\mathds{1}_n$ is the indicator vector where all locations except the $n^{\text{th}}$ are $0$ and the $n^{\text{th}}$ is $1$. Similarly, $\mathds{1}_{\bm{v}}$ is the indicator vector in which all locations $v_{i}$ $(\forall i \in [1, n])$ are $1$. All the indices start at $1$. For this work, all operations are restricted to the Galois field with 2 elements, noted $\mathbb{F}_2$. The symbols $\therefore$ and $\because$ denote \textit{therefore} and \textit{because} respectively. $\bm{a}\circ\bm{b}$ denotes permuting elements in $\bm{a}$ according to the permutation order in $\bm{b}$. 

\subsection{GRAND decoding of linear block codes}
A linear block code is a linear mapping $g: \mathbb{F}_2^k \rightarrow \mathbb{F}_2^n$, 
where $k < n$. In this mapping, a vector $\bm{u}$ of size $k$ maps to a vector $\bm{c}$ of size $n$ and the ratio $R \triangleq \frac{k}{n}$ is called the code-rate. For every linear block code, there exists a $k \times n$ matrix $\bm{G}$
called generator matrix and a $(n-k) \times n$ matrix $\bm{H}$ called parity check matrix. The set of the $2^k$ vectors $\bm{c}$ is called a code $\mathcal{C}$, whose elements $\bm{c}$ are called codewords and each codeword verifies the following property:
\begin{equation}
\forall~\bm{c} \in \mathcal{C},~\bm{H}\cdot\bm{c}^\top = \bm{0}.
\label{eq:pcheck}
\end{equation}
Consider the case where $\bm{c}$ was sent via a noisy channel and $\bm{r}$ was received at the channel's output. Because of the channel noise, $\bm{r}$ and $\bm{c}$ might differ. As a result, the relationship between $\bm{r}$ and $\bm{c}$ may be deduced as follows: $\bm{r} = \bm{c}~\oplus~\bar{\bm{e}}$, where $\bar{\bm{e}}$  represents the channel-induced noise.

GRAND sequentially generates the test error patterns ($\bm{e}$), and applies them to $\bm{r}$ and checks for codebook membership of $\bm{r}$ by verifying that
\begin{equation}
\bm{H} \cdot(\bm{r} \oplus \bm{e})^\top
\label{eq:constraint}
\end{equation}
is equal to zero. If so, $\bm{e}$ is the guessed noise and $\hat{\bm{c}} \triangleq \bm{r}~\oplus~\bm{e}$ is the estimated codeword. GRAND's focus is on noise, thus it can be used with any codebook as long as there is a method for validating a vector's codebook membership. For linear codebooks, the codebook membership for a vector can be verified using $\bm{H}$. 

\subsection{ORBGRAND decoding}\label{sec:ORBGRANDdecoding}
Algorithm \ref{alg:ORBgrand} summarizes the steps of the ORBGRAND. The inputs to the algorithm are the vector of channel observation values (log-likelihood ratios (LLRs)) $\bm{y}$ of size $n$, a $(n-k)\times n$ parity check matrix of the code $\bm{H}$, an $n\times k$ matrix $\bm{G}^{-1}$ such that $\bm{G}\cdot \bm{G}^{-1}$ is the $k\times k$ identity matrix, with $\bm{G}$ a generator matrix of the code, and the maximum logistic weight considered $LW_\text{max}$ ($LW_\text{max} \leq \frac{n(n+1)}{2}$). The logistic weight ($LW$) corresponds to the sum of the indices of non zero elements in the test error patterns ($\bm{e}$)\cite{duffy2020ordered}. For example, $\bm{e} = [0,1,0,0,1,0]$ has a Hamming weight of $2$, whereas the logistic weight is $2 + 5 = 7$.

ORBGRAND begins by sorting $\bm{y}$ in ascending order of absolute value of LLRs ($\vert\bm{y}\vert$), and the corresponding indices are recorded in a permutation vector denoted by $\bm{ind}$ (line 1). Following that, all integer partitions ($\bm{\lambda} = (\lambda_1, \lambda_2, \ldots, \lambda_P) \vdash i$ where $P\in[1,P_\text{max}]$ and $i\in[0,LW_\text{max}]$; explained in section \ref{sec:integerPartition} ) are generated for each logistic weight (line 3). The integer partition ($\bm{\lambda}$) is used to generate a test error pattern ($\bm{e}$), which is then ordered using the permutation vector $\bm{ind}$ (line 5-6). The generated test error patterns are then applied sequentially to the hard decision vector ($\hat{\bm{y}}$), which is obtained from the input soft channel observation values ($\bm{y}$). The resulting vector is then queried for codebook membership (line 7). If the codebook membership criterion (\ref{eq:constraint}) is met, then $\bm{e}$ is the guessed noise and $\hat{\bm{c}} \triangleq \hat{\bm{y}}~\oplus~\bm{e}$ is the estimated codeword. Otherwise, either larger logistic weights or the remaining error patterns for that logistic weight are considered. Finally, using $\bm{G}^{-1}$ (line 8), the original message ($\hat{\bm{u}}$) is retrieved from the estimated codeword, and the decoding process is terminated.

\begin{algorithm}[t]
\caption{\label{alg:ORBgrand}ORBGRAND Algorithm}
    \DontPrintSemicolon
    \SetAlgoVlined  
    \SetKwData{e}{$\bm{e}$}
    \SetKwData{edash}{$\bar{\bm{e}}$}
    \SetKwData{p}{$\bm{p}$}
    \SetKwData{s}{$\bm{S}$}
    \SetKwData{ind}{$\bm{ind}$}
    \SetKwData{LLRsor}{$\bar{\bm{y}}$}
    \SetKwData{LLR}{$\bm{y}$}
    \SetKwData{LLRabs}{$\lvert\bm{y}\rvert$}
    \SetKwData{sortSet}{$[\bm{r},\bm{ind}]$}
    \SetKwData{estm}{$\hat{\bm{u}}$}
    \SetKwData{ginv}{$\bm{G}^{-1}$}
    \SetKwData{LW}{${LW_\text{max}}$}
    \SetKwData{HW}{$\bm{HW}$}
    \SetKwData{yhat}{$\hat{\bm{y}}$}
    \KwIn{\LLR, $\bm{H}$, \ginv, \LW}
    \KwOut{\estm}
    \SetKwFunction{RecursiveComputeLLRs}{recursiveComputeLLRs}
    \SetKwFunction{DecodeRONE}{decodeR1}
    \SetKwFunction{RDecodeRONE}{redecodeR1}
    \SetKwFunction{DecodeRZERO}{decodeR0}
    \SetKwFunction{Find}{findCandidate}
    \SetKwFunction{new}{generateErrorPattern}
    \SetKwFunction{intPartition}{$(\lambda_1, \lambda_2, \ldots, \lambda_P) \vdash i$}
    \SetKwFunction{Sort}{Sort}
    $[\ind,\LLRsor] \leftarrow$ \Sort{\LLRabs} \tcp*[r]{$\bar{\bm{y}}_i\leq\bar{\bm{y}}_j~$$~(\forall i < j)$}
    \For{$i \gets 0$ to \LW}{
        $\s \leftarrow$ \intPartition \tcp*[r]{$ P \in [1,P_\text{max}]$)}
        \ForAll{$\bm{j}$ in \s}{
          $\e \leftarrow \bm{0}$\;
          $\e \leftarrow (\e \oplus \mathds{1}_{\bm{j}})\circ\ind$\; 
          \If{$\bm{H} \cdot(\yhat \oplus \e)^\top == \bm{0}$} {
            $\estm \leftarrow (\yhat \oplus \e)\cdot\ginv$\;
            \KwRet{\estm}
            }
        }
    }
\end{algorithm}

\begin{figure}[!t]
  \centering
  \begin{tikzpicture}[]
    \begin{groupplot}[group style={group name=fer_queries, group size= 2 by 1, horizontal sep=5pt, vertical sep=5pt},
      footnotesize,
      height=.6\columnwidth,  width=0.55\columnwidth,
      xlabel=$\frac{E_b}{N_0}$ (dB),
      xmin=0, xmax=8, xtick={0,1,...,7},
      ymode=log,
      tick align=inside,
      grid=both, grid style={gray!30},
      /pgfplots/table/ignore chars={|},
      ]

      \nextgroupplot[ylabel= FER, ytick pos=left, y label style={at={(axis description cs:-0.225,.5)},anchor=south},ymin=3e-8, ymax = 2]

      \addplot[mark=diamond       , Paired-1 , semithick]  table[x=Eb/N0, y=FER] {data_EbNo/CRC/128_104/CRC_N128_104_GRANDAB.txt};\label{gp:plot1_crc}
        \addplot[mark=square  , Paired-3 , semithick]  table[x=Eb/N0, y=FER] {data_EbNo/CRC/128_104/CRC_N128_104_ORBGRAND.txt}; \label{gp:plot2_crc}

      \coordinate (top) at (rel axis cs:0,1);


      \nextgroupplot[ylabel= FER, ytick pos=right,y label style={at={(axis description cs:1.340,.5)},anchor=south},ymin=3e-8, ymax = 2]
      \addplot[mark=diamond       , Paired-1 , semithick]  table[x=Eb/N0, y=FER] {data_EbNo/RLC/128_104/GRANDAB.txt};
        \addplot[mark=square  , Paired-3 , semithick]  table[x=Eb/N0, y=FER] {data_EbNo/RLC/128_104/ORBGRAND.txt}; 

      \coordinate (bot) at (rel axis cs:1,0);
    \end{groupplot}
    \node[below = 1cm of fer_queries c1r1.south] {\footnotesize (a) : CRC Code};
    \node[below = 1cm of fer_queries c2r1.south] {\footnotesize (b) : RLC Code};
    \path (top|-current bounding box.north) -- coordinate(legendpos) (bot|-current bounding box.north);
    \matrix[
    matrix of nodes,
    anchor=south,
    draw,
    inner sep=0.2em,
    draw
    ]at(legendpos)
    {
      \ref{gp:plot1_crc}& \footnotesize  GRANDAB ($\text{AB}=3$)  &[1pt]
      \ref{gp:plot2_crc}& \footnotesize  ORBGRAND  \\
       }; 
  \end{tikzpicture}
  \caption{\label{fig:fer_crc_rlc} Comparison of the GRANDAB and ORBGRAND decoding performance using (a) Cyclic Redundancy Check (CRC) codes and (b) Random Linear Codes (RLCs) for $n=128$ and $k=104$.}
\end{figure}
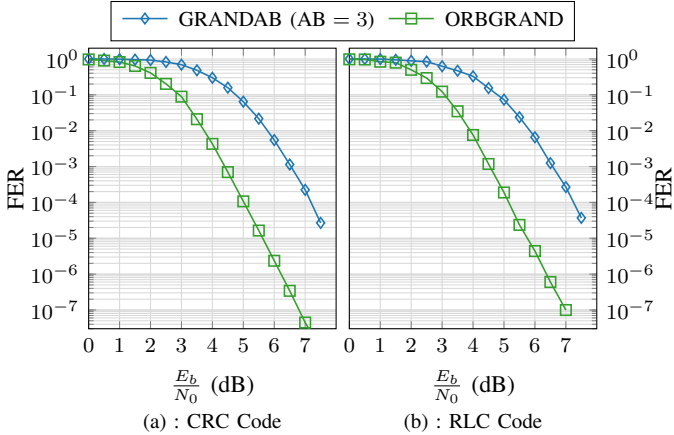

The frame error rate (FER) performance of ORBGRAND, a soft decision decoder, is compared to a hard decision variant GRANDAB for cyclic redundancy check (CRC) codes \cite{Peterson61} and Random Linear Codes (RLCs) \cite{RLC1,RLC2} in Fig. \ref{fig:fer_crc_rlc}. CRC codes \cite{Peterson61}  are typically used to detect errors in communication systems and to assist list-based channel code decoders in selecting the final candidate codeword. On the other hand, CRC codes can also be used for error correction using the GRAND algorithm. The concept of using CRC codes for error correction with GRAND decoding was presented in \cite{GRANDAB-VLSI} and expanded on in \cite{CRCGrand}. 
 RLCs \cite{RLC1,RLC2} are linear block codes that are theoretically high-performing \cite{RLC1,RLC2} but are not considered realistic in terms of decodability. For CRC code (128,104), generator polynomial \texttt{0xB2B117} is used. As seen in Fig. \ref{fig:fer_crc_rlc}, ORBGRAND outperforms GRANDAB by at least $2$ dB for a target FER $\geq$ $10^{-5}$ for both CRC codes and RLCs.

 To conclude, both GRAND and ORBGRAND can be used to decode any linear block code ($n,k$), structured or unstructured, as long as the underlying code's parity check matrix ($\bm{H}$) is provided. Furthermore, as a soft-input decoder, ORBGRAND outperforms hard-input GRANDAB.

\section{ORBGRAND Design Considerations}\label{sec:integerPartition}
ORBGRAND is centered around generating distinct integer partition of a particular logistic weight ($LW$), and these integer partitions are then used to generate test error patterns ($\bm{e}$). 

An integer partition $\bm{\lambda}$ of a positive integer $m$, noted $\bm{\lambda} = (\lambda_1, \lambda_2, \ldots, \lambda_P) \vdash m$ where $\lambda_1>\lambda_2>\ldots>\lambda_P$, is the multiset of positive integers $\lambda_i$ $(\forall i \in [1, P])$ that sum to $m$. If all parts $\lambda_i$ $(\forall i \in [1, P])$ of the integer partition are different, the partition is called distinct. Note that the generated test error pattern obtained from an integer partition with $P$ elements has a Hamming weight of $P$. The ORBGRAND considers $\frac{n(n+1)}{2}$ as the maximum logistic weight for a $(n,k)$ linear block code ($LW_\text{max} = \frac{n(n+1)}{2}$). Furthermore, the generated TEPs have a maximum Hamming weight of $n$ ($HW_\text{max} = n$). It should be noted that only distinct integer partitions are taken into account when generating TEPs, and all parts ($\lambda_i$) of the partition are less than or equal to $n$ ($\lambda_i\leq~n$ $(\forall i \in [1, P])$).

\subsection{Parametric analysis of ORBGRAND}

As seen in Algorithm \ref{alg:ORBgrand}, $LW_\text{max}$ is an important parameter for the ORBGRAND. $LW_\text{max}$ impacts ORBGRAND's decoding performance as well as its computational complexity. The computational complexity of GRAND and its variants can be expressed in terms of the number of codebook membership queries required. Furthermore, the complexity can be further subdivided into worst-case complexity, which corresponds to the maximum number of codebook membership queries required, and average complexity, which corresponds to the average number of codebook membership queries required. It should be noted that with improved channel conditions, the average complexity of GRAND and its variants decreases sharply as transmissions subject to light noise are quickly decoded \cite{Duffy19TIT}\cite{solomon2020soft}\cite{duffy2020ordered}.

In addition to $LW_\text{max}$, another important parameter of ORBGRAND is the number of elements ($P$) in the generated integer partition ($\bm{\lambda}$). Furthermore, $P$ denotes the Hamming weight of the generated test error pattern. ORBGRAND's $LW_\text{max}$ and $P$ parameters can be appropriately chosen to reduce worst-case complexity while having a minimal impact on decoding performance.

Fig. \ref{fig:fer_polar_bch}(a) depicts the impact of different parameters ($LW_\text{max}$, $P$) on ORBGRAND decoding performance for 5G CRC-aided polar code (128,105+11) \cite{Duffy20205g,3GPP} with BPSK modulation over an AWGN channel. Furthermore, the decoding performance of state-of-the-art soft-input decoders such as CA-SCL decoder \cite{Tal15,LLR-List} and DSCF \cite{Chandesris,Ercan_2020} decoder is included for reference. The number of DSCF attempts ($T_\text{max}$) parameter is set to 50, and the maximum bit-flipping order ($\omega$) is set to 2. As seen in Fig. \ref{fig:fer_polar_bch}(a), the FER performance of ORBGRAND outperforms the GRANDAB (AB=3) \cite{Duffy19TIT} decoder by at least $2$ dB for target FERs $\leq 10^{-5}$. Furthermore, the ORBGRAND decoder outperforms the CA-SCL decoder \cite{Tal15,LLR-List}  and the DSCF decoder \cite{Chandesris,Ercan_2020} for decoding polar code (128,105+11) at target FER of $10^{-4}$ and $10^{-6}$, respectively.

The maximum number of codebook membership queries for $LW_\text{max}$ values of 128, 96, and 64 is $5.33\times10^{7}$, $3.69\times10^{6}$, and $1.5\times10^{5}$, respectively. When $LW_\text{max}$ is decreased from $128$ to $64$, a performance degradation of $0.2$ dB is observed at FER = $10^{-7}$, as shown in Fig. \ref{fig:fer_polar_bch}(a). The complexity, on the other hand, is reduced by $355\times$ as a result of this reduction. Similarly, the degradation in ORBGRAND decoding performance for the considered polar code with $LW_\text{max}=64$  is negligible when $P=6$ is used instead of an unbounded $P$. As a consequence, with $LW_\text{max}=64$ and $P=6$, the maximum number of queries is limited to $1.16\times10^{5}$, and the ORBGRAND (for target FER of $10^{-5}$) performs similarly to the state-of-the-art DSCF polar code decoder  \cite{Chandesris,Ercan_2020}. 

Similarly, Fig. \ref{fig:fer_polar_bch}(b) presents a comparison of ORBGRAND decoding performance for the BCH (127,106) code. The ORBGRAND decoding performance is compared to that of the hard decision input Berlekamp-Massey (B-M) decoder \cite{Berlekamp68,Massey69} and the soft-input OSD \cite{Fossorier95,OSD3,OSD4} decoder. ORBGRAND decoding of BCH (127, 106) code results in a $1.7$dB performance gain at a target FER of $10^{-5}$ when compared to B-M decoder. For target FER of $10^{-6}$, the OSD decoder outperforms the ORBGRAND decoder by $0.7$dB. However, due to the complexity of Gaussian elimination ($\mathcal{O}(n^3)$) and the reliance of column permutation of the $\bm{G}$ matrix on the received vector from the channel ($\bm{y}$), OSD is unsuitable for parallel hardware implementation \cite{Scholl13}. ORBGRAND, on the other hand, requires only simple bit-flipping and syndrome check (codebook membership verification) operations, making it ideal for applications requiring ultra-low decoding latency.

To conclude, the appropriate selection of ORBGRAND parameters $LW_\text{max}$ and $P$ results not only in a reduction in computational complexity but also in the design of simple hardware, as seen in the section \ref{sec:vlsi}.

\subsection{Proposed simplified generation of integer partitions ($\bm{\lambda}$)}
A hardware implementation for the generation of integer partitions was proposed in \cite{T14high-speedhardware}. However, since the generated partitions are not distinct, their approach cannot be directly applied to our proposed ORBGRAND architecture. Furthermore, their integer partition generation is sequential, which is unsuitable for use in a parallelized, high-throughput hardware architecture. In this section, we propose a method for generating integer partitions of a particular logistic weight using a specific arrangement of shift registers and XOR gates.

For generating integer partitions of a specific logistic weight $m$, we noticed that a breakdown of $m$ generates convenient patterns. For example, for $m = 12$ the distinct integer partitions are $\bm{\lambda} = \{(12);$ $(11,1);$ $ (10,2);$ $(9,3);$ $(8,4);$ $(7,5);$ $(9,2,1);$ $(8,3,1);$ $(7,4,1);$ $(6,5,1);$ $(7,3,2);$ $(6,4,2);$ $(5,4,3);$ $ (6,3,2,1);$ $(5,4,2,1)\}$. If a listing order is followed for $P =2$ (i.e. the subset $\{(11,1);$ $(10,2);$ $(9,3);$ $(8,4);$ $(7,5)\}$), the first integer descends while the second ascends. Therefore for a particular logistic weight $m$, integer partitions of size 2 ($P =2$) can be generated as $(\lambda_1,\lambda_2)\vdash m$ where $\lambda_{2}\in\mathcal[1,\left\lfloor\frac{m}{2}\right\rfloor-1 ]$ and $\lambda_{1} = m-\lambda_{2}$.

Similar trends can be observed for higher-order partitions such as $P=3$ (i.e. the subset $\{(9,2,1);$ $(8,3,1);$ $(7,4,1);$ $(6,5,1);$ $(7,3,2);$ $(6,4,2);$ $(5,4,3)\}$), the first integer descends while the second ascends as the third integer remains fixed until all iterations for the first two integers are complete. Hence, integer partitions of size 3 ($P =3$) can be generated as $(\lambda_{1},\lambda_{2},\lambda_{3})$$\vdash$$m$ where, $\lambda_{3}\in\mathcal[1,\lambda_{3}^{max}]$, $\lambda_{2}\in\mathcal[\lambda_{3}+1,\lambda_{2,\lambda_{3}}^{max}]$ and $\lambda_{1} = m-\lambda_{2}-\lambda_{3}$. Moreover, $\lambda_{3}^{max}$ is the maximum value of $\lambda_{3}$, and $\lambda_{2,\lambda_{3}}^{max}$ is the maximum value of $\lambda_{2}$ for a specific value of $\lambda_{3}$ ($\lambda_{3}\in\mathcal[1,\lambda_{3}^{max}]$). 

 In general, an integer partition of size $P$ can be generated as $(\lambda_1, \lambda_2, \ldots, \lambda_P) \vdash m$ where $\lambda_P \in [1,\lambda_{P}^\text{max}]$, $\lambda_i \in [\lambda_{i+1}+1,\lambda_{i,\lambda_{i+1},\ldots,\lambda_{P-1}}^\text{max}] \forall i \in [2, P-1]$ and $\lambda_1=m-\sum\limits_{i=2}^{P}\lambda_{i}$. Moreover, $\lambda_{P}^{max}$ is the maximum value of $\lambda_{P}$, and $\lambda_{i,\lambda_{i+1}\ldots,\lambda_{P-1}}^\text{max}$ is the maximum value of $\lambda_{i}$ for specific values of $\lambda_{j}$ $(\forall j \in [i+1, P-1])$. For similpicity, we will denote $\lambda_{i,\lambda_{i+1}\ldots,\lambda_{P-1}}^\text{max}$  as $\lambda_{i}^\text{max}$. The maximum value for each $\lambda_i$ $(\forall i \in [2, P])$ is bounded by (\ref{eq:lambda_bound}).

\begin{lemma}
If a positive integer $m$ is partitioned into $P$ distinct parts, $\forall i \in [2, P]$, and assuming that $\lambda_i$ are 
ordered, the maximum value for each $\lambda_i$ is bounded by
\begin{equation}
 \lambda_i^\text{max} < \frac{2\times m - (i\times(i-1))+2-2\times\sum\limits_{j=i+1}^{P}\lambda_{j}}{2\times i}.
\label{eq:lambda_bound}    
\end{equation} whereas the first value of $\lambda$ is given as $\lambda_1=m-\sum\limits_{j=2}^{P}\lambda_{j}$.
\end{lemma}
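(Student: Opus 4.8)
The plan is to treat $\lambda_{i+1},\dots,\lambda_P$ as fixed and to bound the largest admissible value of $\lambda_i$ using only the distinctness of the parts together with the constraint $\sum_{j=1}^{P}\lambda_j=m$. Set $S \triangleq m-\sum_{j=i+1}^{P}\lambda_j$, so that the first $i$ parts must satisfy $\sum_{t=1}^{i}\lambda_t = S$. Because the partition is distinct and ordered, $\lambda_1>\lambda_2>\dots>\lambda_i$; reading this chain of strict inequalities between integers downward from $\lambda_i$ gives $\lambda_{i-t}\ge \lambda_i+t$ for every $t\in[0,i-1]$ (each strict inequality between integers costs at least $1$).

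First I would sum those $i$ inequalities to obtain a lower bound on $S$ in terms of $\lambda_i$:
\begin{equation}
S=\sum_{t=0}^{i-1}\lambda_{i-t}\;\ge\;\sum_{t=0}^{i-1}(\lambda_i+t)\;=\;i\,\lambda_i+\frac{i(i-1)}{2}.
\end{equation}
Rearranging for $\lambda_i$ gives $\lambda_i \le \dfrac{2S-i(i-1)}{2i}=\dfrac{2m-i(i-1)-2\sum_{j=i+1}^{P}\lambda_j}{2i}$. Since the right-hand side of (\ref{eq:lambda_bound}) is exactly this quantity plus $\tfrac1i>0$, every admissible $\lambda_i$---and hence $\lambda_i^{\text{max}}$---satisfies the claimed strict inequality. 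The asserted identity $\lambda_1=m-\sum_{j=2}^{P}\lambda_j$ is just a rewriting of $\sum_{j=1}^{P}\lambda_j=m$.

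I would finish by noting the edge cases: for $i=P$ the sum $\sum_{j=P+1}^{P}\lambda_j$ is empty and equal to $0$, and if no distinct partition of $m$ with the prescribed tail $\lambda_{i+1},\dots,\lambda_P$ exists then the bound holds vacuously. The argument is elementary, so there is no serious obstacle; the single step carrying the content is the lower bound on $S$---equivalently, the observation that among $i$ distinct positive integers whose minimum equals $\lambda_i$ the smallest possible sum is that of the consecutive block $\lambda_i,\lambda_i+1,\dots,\lambda_i+i-1$---and the only care needed is to preserve the direction of each inequality when dividing through by $2i$. A minor stylistic choice is whether to present the harmless slack ``$+2$'' in the numerator as deliberate (so the bound is a clean strict inequality directly usable as a loop limit in hardware) or to prove the tight ``$\le$'' form and weaken at the very end; I would do the latter.
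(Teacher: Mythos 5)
Your proof is correct, but it takes a genuinely different route from the paper. The paper proves the bound by induction on $i$: the base case $i=2$ is extracted from $\lambda_2<\lambda_1=m-\sum_{j=2}^{P}\lambda_j$, and the inductive step combines the hypothesis for $i=k$ with the distinctness relation $\lambda_{k+1}^{\text{max}}+1\le\lambda_k^{\text{max}}$ through a fairly long chain of algebraic rearrangements. You instead bound each $i$ directly and independently: with the tail fixed and $S=m-\sum_{j=i+1}^{P}\lambda_j$, the first $i$ parts are distinct ordered positive integers with minimum $\lambda_i$, so $\lambda_{i-t}\ge\lambda_i+t$ and hence $S\ge i\lambda_i+\tfrac{i(i-1)}{2}$, which is the entire content of the lemma in one line. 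Your route buys two things the induction does not make visible: it isolates the combinatorial fact doing the work (the minimal sum of $i$ distinct positive integers with prescribed minimum is the consecutive block), and it yields the sharp non-strict bound $\lambda_i\le\frac{2S-i(i-1)}{2i}$, of which the paper's strict inequality is a weakening by the slack $\tfrac{1}{i}$ coming from the ``$+2$'' in the numerator. The paper's induction arrives at the same strict bound but never exhibits the tight version. Your handling of the edge cases ($i=P$ giving an empty tail sum, and vacuous truth when no admissible partition exists) and of $\lambda_1=m-\sum_{j=2}^{P}\lambda_j$ matches the paper's intent, so nothing is missing.
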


\begin{proof}
The proof is provided in Appendix A.
\end{proof}

\begin{figure*}[!t]
    \centering
    \captionsetup[subfigure]{oneside}
    \begin{tikzpicture}

    \begin{customlegend}[legend columns=4,legend style={align=left,draw},legend entries={{} {{\footnotesize{ORBGRAND}, $LW_\text{max}$=8256}},{} {{\footnotesize{ORBGRAND}, $LW_\text{max}$=8128}},{} {{\footnotesize{ORBGRAND}, $LW_\text{max}$=128}},{} {{\footnotesize{ORBGRAND}, $LW_\text{max}$=127}},{} {{\footnotesize{ORBGRAND}, $LW_\text{max}$=96}},{} {{\footnotesize{ORBGRAND}, $LW_\text{max}$=96, $P \leq 8$}},{} {{\footnotesize{ORBGRAND}, $LW_\text{max}$=64}},
    {} {{\footnotesize{ORBGRAND}, $LW_\text{max}$=64, $P \leq 6$}},{} {{\footnotesize{GRANDAB }, $AB=3$}}, {} {{\footnotesize{B-M Decoder  }}},{} {{\footnotesize{DSCF, $\omega=2$, $T_\text{max}=50$}}},{} {{\footnotesize{CA-SCL ($L = 32$)}}},{} {{\footnotesize{OSD ($\text{Order}=2$)}}}}]
    \addlegendimage{draw=Paired-1,mark=o, semithick}
    \addlegendimage{draw=Paired-6,mark=o, semithick}
    \addlegendimage{draw=Paired-4,mark=diamond, semithick}
    \addlegendimage{draw=Paired-8,mark=diamond, semithick}
    \addlegendimage{draw=Paired-12,mark=star,semithick}
    \addlegendimage{draw=Paired-3,mark=square, semithick}
    \addlegendimage{draw=Paired-5,mark=triangle, semithick}
    \addlegendimage{draw=Paired-7,mark=pentagon, semithick}
    \addlegendimage{draw=Paired-11,mark=*,fill=Paired-11, semithick, dashed} 
    \addlegendimage{draw=Paired-13,mark=*,fill=Paired-13,semithick, dashed}
    \addlegendimage{draw=Paired-9, thick, dashed}
    \addlegendimage{draw=Paired-11, mark=+,mark options={scale=1.2}, semithick}
    \addlegendimage{draw=Paired-13,mark=pentagon, semithick} 
    \end{customlegend}
    \end{tikzpicture}
    \subfloat[Polar Code(128,105+11)]{
    \begin{tikzpicture}
    \begin{semilogyaxis}[
            footnotesize, width=\columnwidth, height=.67\columnwidth,    
            xmin=0, xmax=8, xtick={0,1,...,7},
            ymin=3e-8,  ymax=2,
            xlabel=$\frac{E_b}{N_0}$, ylabel=FER,  
            grid=both, grid style={gray!30},
            tick align=outside, tickpos=left, 
            legend pos=south west, 
            legend cell align={left},
            /pgfplots/table/ignore chars={|},
            mark options={solid},
        ]
        \addplot[mark=o,mark options={scale=1.5} , Paired-1 , semithick]  table[x=Eb/N0, y=FER] {data_EbNo/Polar/128_105/ORBGRAND_LW8256.txt};
        \addplot[mark=diamond,mark options={scale=2}  , Paired-4, semithick]  table[x=Eb/N0, y=FER] {data_EbNo/Polar/128_105/ORBGRAND_LW128.txt};
        \addplot[mark=star  , Paired-12, semithick]  table[x=Eb/N0, y=FER] {data_EbNo/Polar/128_105/ORBGRAND_LW96.txt};
        \addplot[mark=square  , Paired-3 , semithick]  table[x=Eb/N0, y=FER] {data_EbNo/Polar/128_105/ORBGRAND_LW96_HW8.txt}; 
        \addplot[mark=triangle, Paired-5 , semithick]  table[x=Eb/N0, y=FER] {data_EbNo/Polar/128_105/ORBGRAND_LW64.txt}; 
        \addplot[mark=pentagon, Paired-7 , semithick]  table[x=Eb/N0, y=FER] {data_EbNo/Polar/128_105/ORBGRAND_LW64_HW6.txt};
        \addplot[mark=*  , Paired-11, semithick, dashed]  table[x=Eb/N0, y=FER] {data_EbNo/Polar/128_105/GRANDAB.txt};
        \addplot[ Paired-9 , thick, dashed]  table[x=Eb/N0, y=FER] {data_EbNo/Polar/128_105/DSCF_N128_K105_C11_w2_Tmax50.txt}; 
        \addplot[mark=+,mark options={scale=1.2}  , Paired-11, semithick]  table[x=Eb/N0, y=FER] {data_EbNo/Polar/128_105/SCL_L32.txt};
    \end{semilogyaxis}
    \end{tikzpicture}
}
    \subfloat[BCH code (127, 106)]{
    \begin{tikzpicture}
    \begin{semilogyaxis}[
            footnotesize, width=\columnwidth, height=.67\columnwidth,    
            xmin=0, xmax=8, xtick={0,1,...,7},
            ymin=3e-8,  ymax=2,
            xlabel=$\frac{E_b}{N_0}$, ylabel=FER,  
            grid=both, grid style={gray!30},
            tick align=outside, tickpos=left, 
            legend pos=south west, 
            legend cell align={left},
            /pgfplots/table/ignore chars={|},
            mark options={solid},
        ]

        \addplot[mark=o, mark options={scale=1.5} , Paired-6 , semithick]  table[x=Eb/N0, y=FER] {data_EbNo/BCH/127_106/ORBGRAND_LW8128.txt};
        \addplot[mark=diamond, mark options={scale=2}  , Paired-8, semithick]  table[x=Eb/N0, y=FER] {data_EbNo/BCH/127_106/ORBGRAND_LW127_HW127.txt};
        \addplot[mark=star  , Paired-12, semithick]  table[x=Eb/N0, y=FER] {data_EbNo/BCH/127_106/ORBGRAND_LW96.txt};
        \addplot[mark=square  , Paired-3 , semithick]  table[x=Eb/N0, y=FER] {data_EbNo/BCH/127_106/ORBGRAND_LW96_HW8.txt}; 
        \addplot[mark=triangle, Paired-5 , semithick]  table[x=Eb/N0, y=FER] {data_EbNo/BCH/127_106/ORBGRAND_LW64.txt}; 
        \addplot[mark=pentagon, Paired-7 , semithick]  table[x=Eb/N0, y=FER] {data_EbNo/BCH/127_106/ORBGRAND_LW64_HW6.txt};
        \addplot[mark=*,Paired-13,semithick, dashed]  table[x=Eb/N0, y=FER] {data_EbNo/BCH/127_106/BCH_N127_106_BM.txt};
        \addplot[ mark=pentagon,Paired-13 , semithick]  table[x=Eb/N0, y=FER] {data_EbNo/BCH/127_106/BCH_N127_106_OSD.txt}; 

    \end{semilogyaxis}
    \end{tikzpicture}
}
\caption{Comparison of decoding performance of ORBGRAND decoding with different parameters ($LW_\text{max}$, $P$) for 5G CRC-aided Polar Code (128,105+11) and BCH code (127, 106).}
\label{fig:fer_polar_bch}
\end{figure*}

\section{VLSI Architecture for ORBGRAND}\label{sec:vlsi}

A VLSI architecture for GRANDAB (AB=3) was proposed in \cite{GRANDAB-VLSI} for $(n,k)$ linear block codes. Shift registers are used in \cite{GRANDAB-VLSI} to store syndrome of error patterns with a Hamming weight of $1$ (denoted as $\bm{s}_i = \bm{H}\cdot\mathds{1}_i^\top$, $i \in \llbracket 1\isep n \rrbracket$). Furthermore, \cite{GRANDAB-VLSI} employs the underlying code's linearity property to combine multiple $\bm{s}_i$ to generate syndrome of an error pattern with a Hamming weight of $l > 1$ ($\bm{s}_{1,2\ldots,l} = \bm{H}\cdot\mathds{1}_1^\top \oplus \bm{H}\cdot\mathds{1}_2^\top \ldots \oplus\bm{H}\cdot\mathds{1}_l^\top $). We refer the reader to \cite{GRANDAB-VLSI} for more details. The VLSI architecture for GRANDAB (AB=3) \cite{GRANDAB-VLSI} forms the basis for the proposed ORBGRAND architecture. The GRANDAB decoder \cite{GRANDAB-VLSI} can only generate test error patterns with Hamming weights $\leq3$. As a result, significant improvements are needed to cater to soft-inputs, to generate error patterns in increasing order of their logistic weight, and to consider larger Hamming weights as required by ORBGRAND.

\begin{figure}
  \centering
  \includegraphics[width=0.45\textwidth]{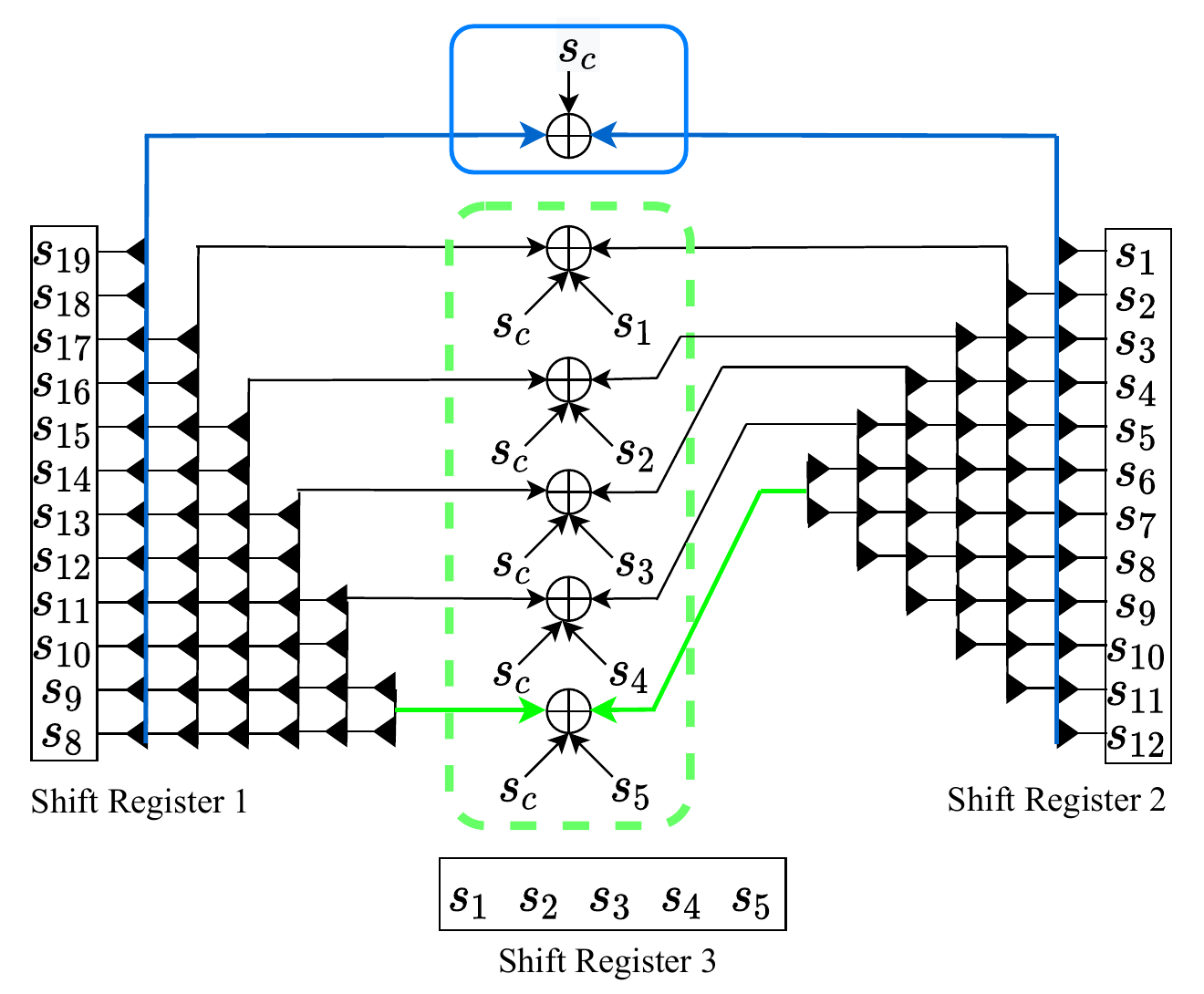}
  \caption{Example of the shift registers content and interconnection for logistic weight $m=20$ for checking error patterns of Hamming weights 2 and 3.}
  \label{fig:registers} 
\end{figure}
\begin{figure}[!ht]
  \centering
     \subfloat[Interconnections and the associated XOR gates for the first bus for checking error patterns of Hamming weight of 2 ($P=2$).\label{subfig-1:firstBus}]{%
     \centering
       \includegraphics[width=0.45\textwidth]{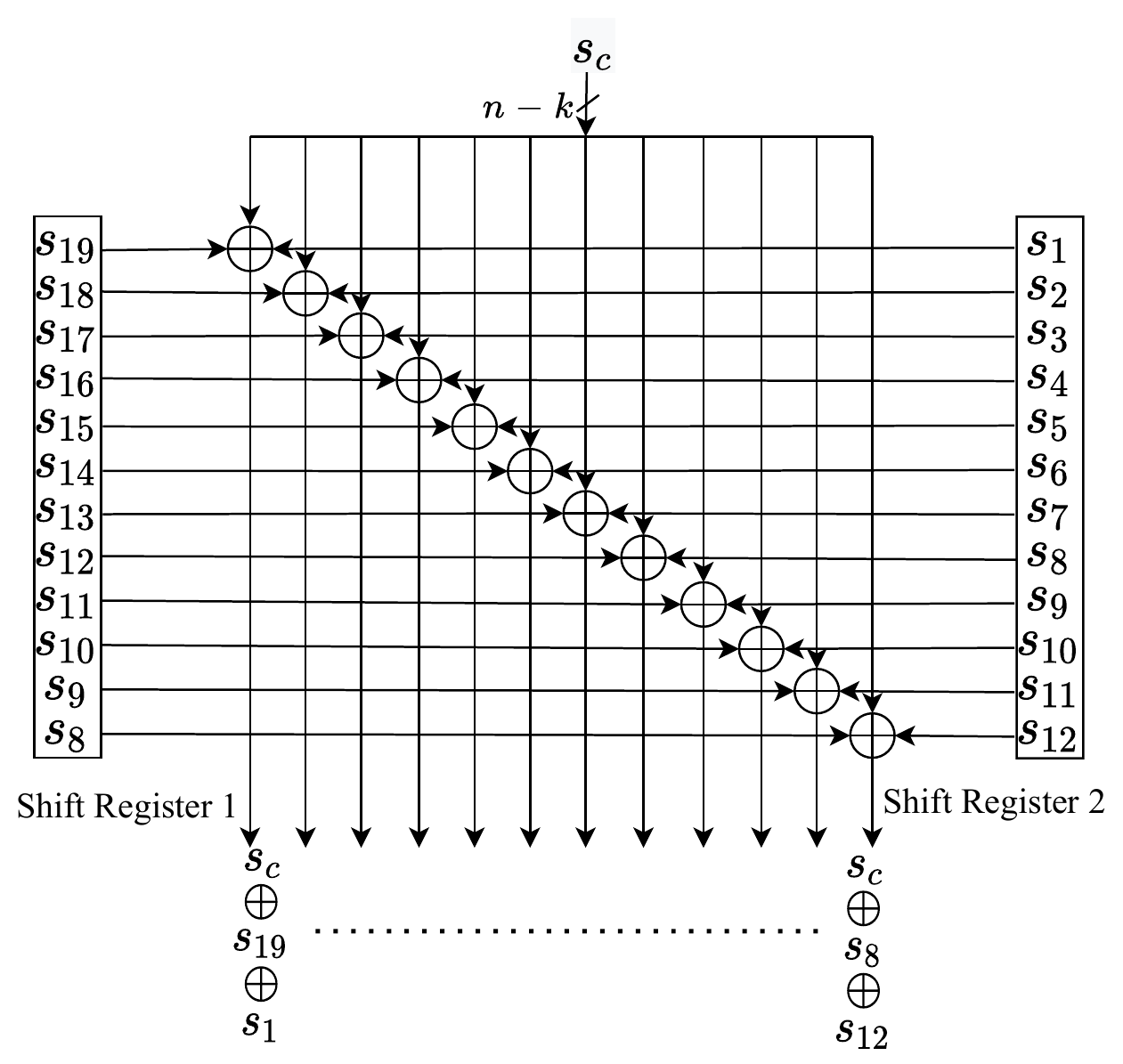}
     }
     \hfill
     \subfloat[Interconnections and the associated XOR gates for the last $(6^{th})$ bus for checking error patterns of Hamming weight of 3 ($P=3$).\label{subfig-2:lastBus}]{%
     \centering
       \includegraphics[width=0.45\textwidth]{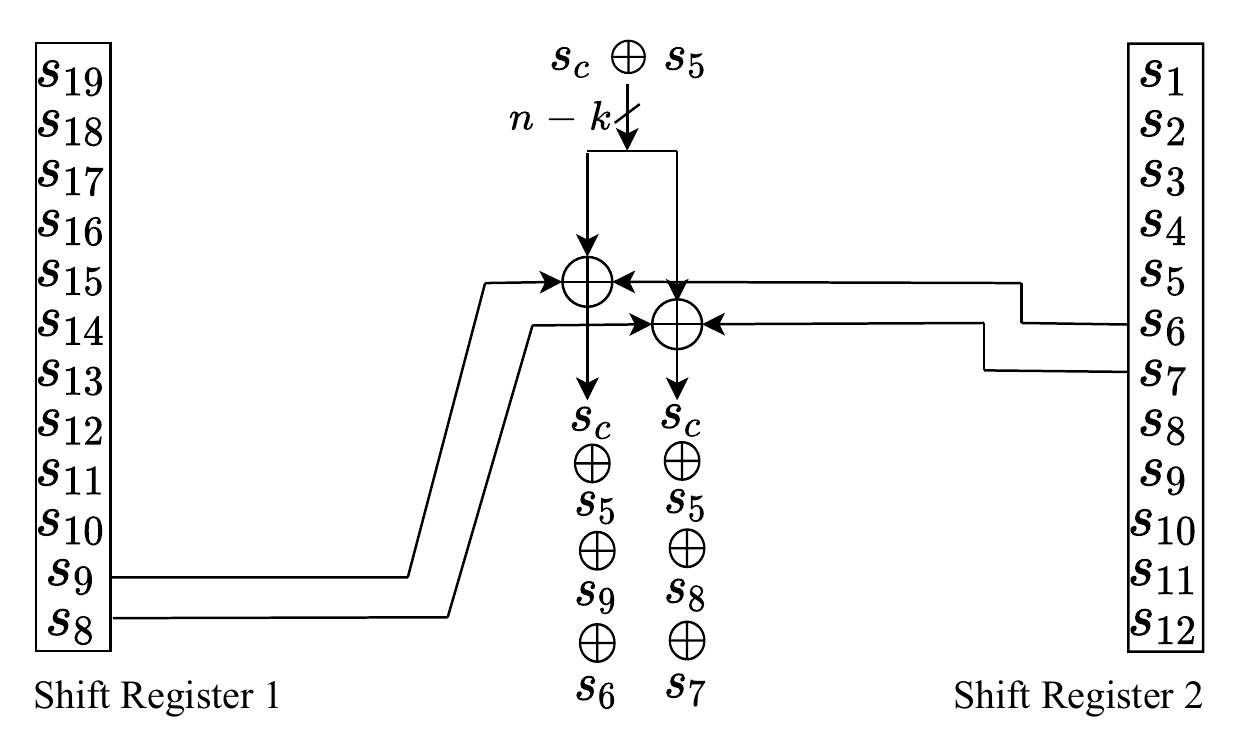}
     }
     \caption{Example of interconnections and the associated XOR gates for the first and last bus for logistic weight $m=20$.}
     \label{fig:cactus_1_a}
   \end{figure}

\subsection{Scheduling and details} \label{sec:Scheduling}
As explained in section \ref{sec:integerPartition}, ORBGRAND is based on generating test error patterns corresponding to integer partitions of a specific logistic weight $m$ ($\forall m \in [3,LW_\text{max}]$). Moreover, for each $m$, integer partitions are generated with size $P$ ($\forall P \in [2,P_\text{max}]$). We propose to generate these integer partitions in ascending order of their size ($P$). This modification does not impact the FER performance, however, it helps in designing a simpler hardware implementation.

In this section, we propose an arrangement and interconnection of shift registers and XOR gates to generate test error patterns corresponding to a specific logistic weight $m$. The shift registers store the syndromes that correspond to error patterns with a Hamming weight of $1$ ($\bm{s}_i$). To check for error patterns of hamming weight $P$, these syndromes are combined using an array of XOR gates.

\subsection{Generating test error patterns for $P\leq3$}

The size and number of shift registers used in \cite{GRANDAB-VLSI} have a direct impact on the Hamming weight of the error patterns that can be evaluated in parallel. For example, in \cite{GRANDAB-VLSI}, two $n\times(n-k)$ shift registers are used to evaluate $n$ test error patterns in parallel with a Hamming weight of 2. However, if more shift registers are added, the number of interconnections becomes a problem. As a result, for the proposed ORBGRAND architecture, we choose three shift registers that correspond to an integer partition of size 3 ($P=3$).

In the proposed ORBGRAND VLSI architecture, $\lambda_1$, $\lambda_2$, $\lambda_3$ $((\lambda_1, \lambda_2,\lambda_3) \vdash m)$ are mapped to first, second and third shift register respectively. The third shift register is a $\lambda_{3}^\text{max}\times(n-k)$ bit shift register, where $\lambda_{3}^\text{max}$ value is given by (\ref{eq:lambda_bound}) corresponding to $P = 3$. Whereas the first and second shift registers are each $2\times(\lambda_{3}^\text{max}+1)\times(n-k)$ bits in size. Since we have $\lambda_1 = m-\sum_{i=2}^{3}\lambda_i$, corresponding to $P = 3$, $\bm{s_{m-i}}$ is stored at the $i^{th}$ index of the first shift register, while for the second and third shift registers $\bm{s_i}$ is stored at the $i^{th}$ index. 

Fig. \ref{fig:registers} shows an example of the content and interconnection of three shift registers for logistic weight $m = 20$. The elements of the three shift registers are syndromes ($\bm{s_i}$) of the error pattern with Hamming weight of $1$. These syndromes ($\bm{s_i}$)  of the error pattern with Hamming weight of $1$ are combined using an array of $(n-k)$-wide XOR gates to check for error patterns with Hamming weights $2$ and $3$.

A collection of these connections is defined as a \textit{bus}. Since there are numerous connections and XOR gates involved, we used a single XOR gate and a single \textit{bus} symbol to illustrate these interconnections in Fig. \ref{fig:registers}. As seen in Fig. \ref{fig:registers}, there are $6$ buses ($\lambda_{3}^\text{max}+1$, where $\lambda_{3}^\text{max}=5$ for $P=3$ (\ref{eq:lambda_bound})) for $m = 20$. The first bus (highlighted by solid rectangle) is used to check error patterns with Hamming weight of $2$, and the remaining buses (highlighted by the dashed rectangle in Fig. \ref{fig:registers}) are used to check for error patterns of Hamming weight 3.

To check the error patterns corresponding to a Hamming weight 2 ($P=2$), the first bus (highlighted by solid rectangle) is used to combine all the elements of shift register 1 with all the elements of the shift register 2 using an array of XOR gates. These results are again combined with the syndromes of the received vector ($\bm{s_c}$) to check for the error patterns with Hamming weight of $2$. The detailed interconnections and the associated XOR gates for the first bus are shown in Fig. \ref{subfig-1:firstBus}. 

Similarly, to check the error patterns corresponding to a Hamming weight of 3 ($P=3$), the selected elements of the shift register 1 and 2 are again combined with $\bm{s_c}$, but also with the elements of the shift register 3. We use a single bus and a single XOR gate to illustrate these interconnections, which are depicted in Fig. \ref{fig:registers} by the dashed rectangle.  The detailed interconnections and the associated XOR gates for the last ($6^{th}$) bus are shown in Fig. \ref{subfig-2:lastBus}. 

\begin{figure}
  \centering
  \includegraphics[width=0.5\textwidth]{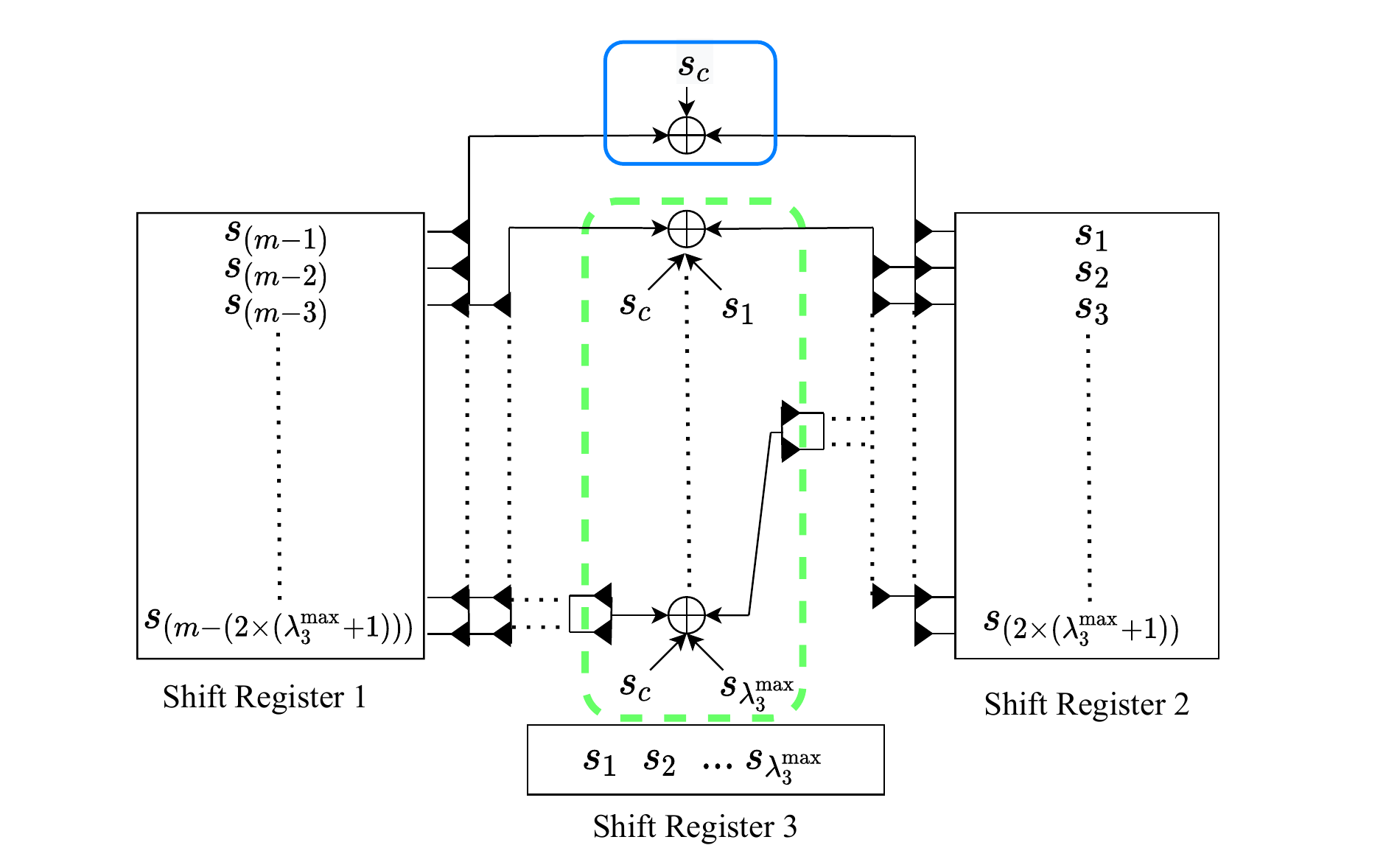}
  \caption{Shift registers contents for checking error patterns corresponding to a Hamming weight of 2 and 3 for any logistic weight $m$.}
  \label{fig:registersPN1} 
\end{figure}
\begin{figure}
  \centering
  \includegraphics[width=0.5\textwidth]{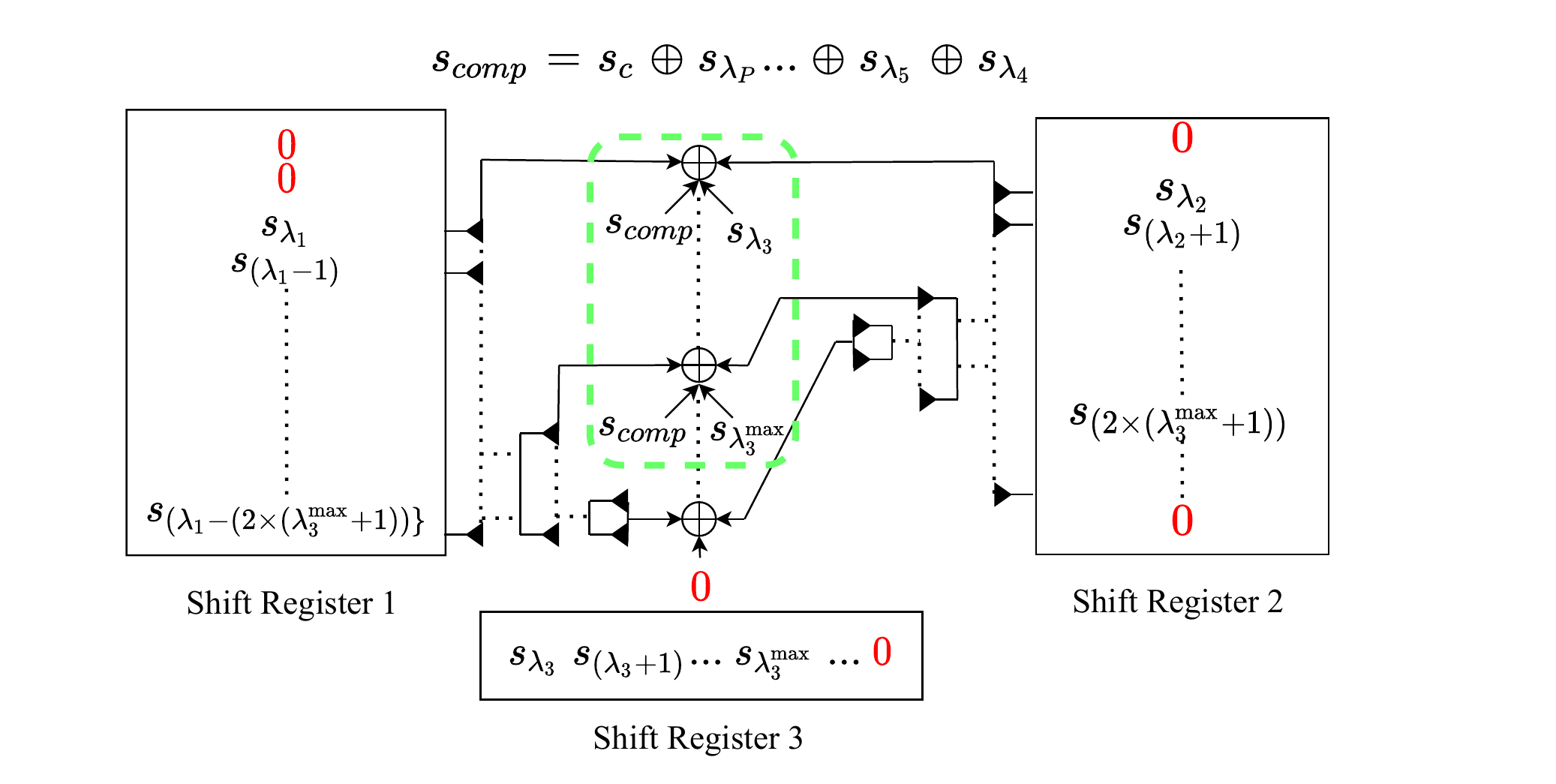}
  \caption{Shift registers contents for checking error patterns corresponding to $P>3$ for any logistic weight $m$ ($\lambda_1 = m-\sum_{i=2}^{P}\lambda_i$).}
  \label{fig:registersPN2} 
\end{figure}
\begin{figure}[!ht]
  \centering
     \subfloat[Shift registers contents for checking test error patterns corresponding to $P=4$ at time step 1.\label{subfig-1:registersP41}]{%
     \centering
       \includegraphics[width=0.45\textwidth]{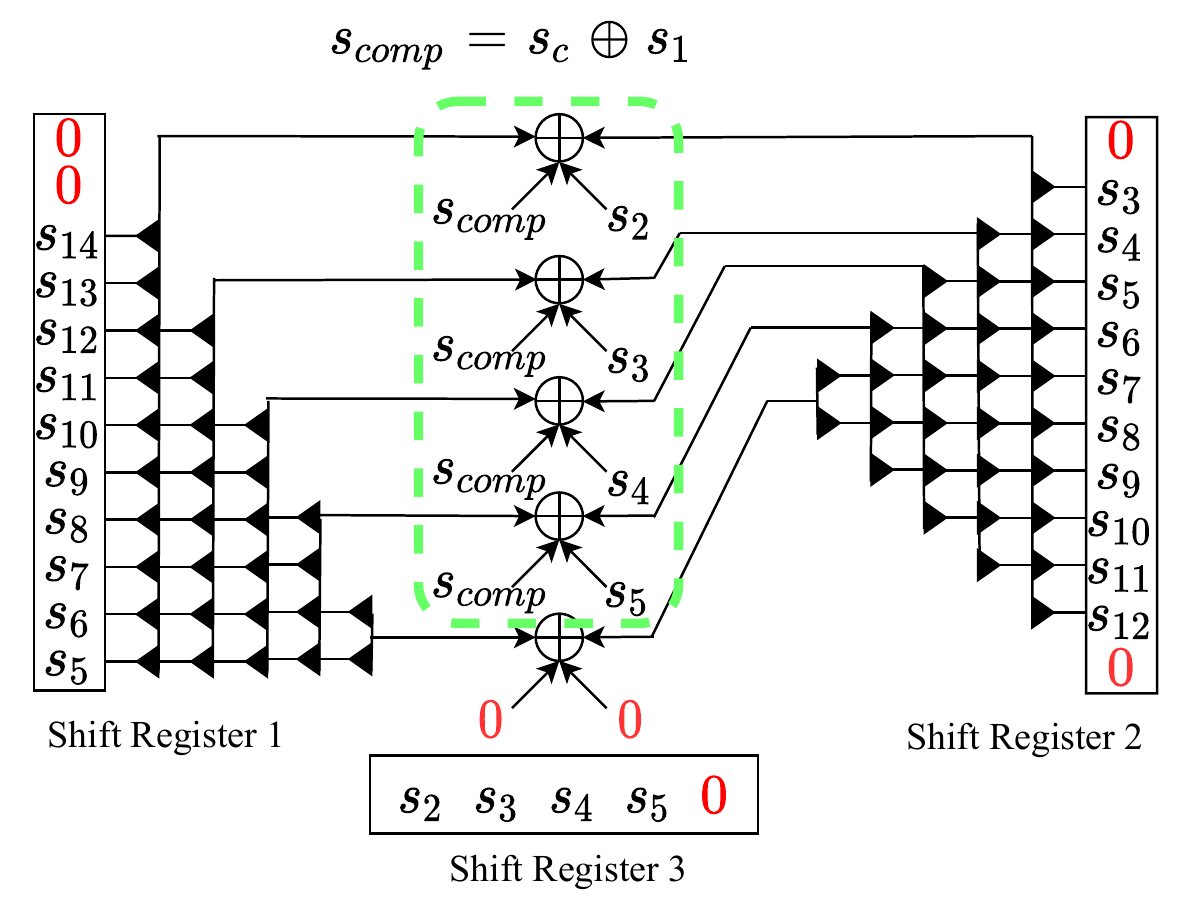}
     }
     \hfill
     \subfloat[Shift registers contents for checking test error patterns corresponding to $P=4$ at time step 2.\label{subfig-2:registersP42}]{%
     \centering
       \includegraphics[width=0.45\textwidth]{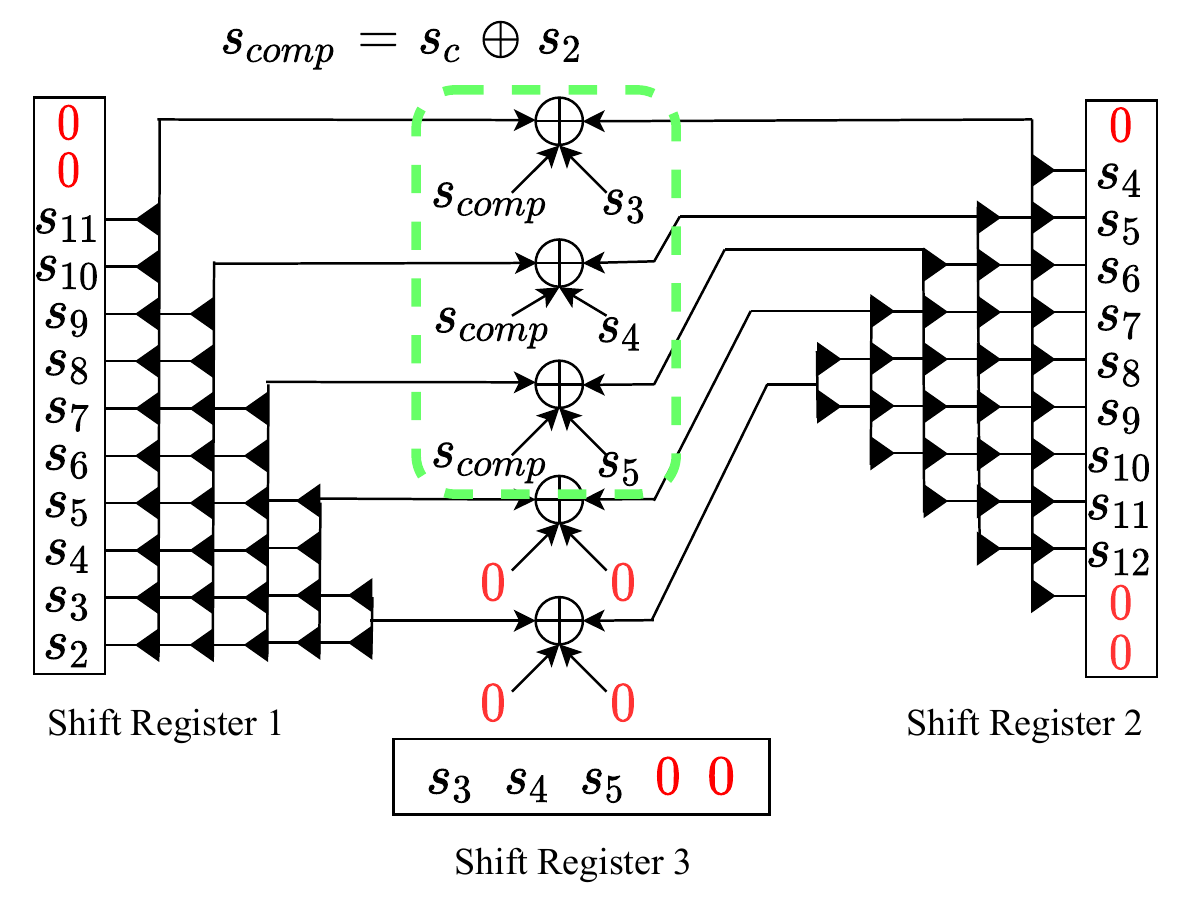}
     }
     \hfill
     \subfloat[Shift registers contents for checking test error patterns corresponding to $P=4$ at time step 3.\label{subfig-3:registersP43}]{%
     \centering
       \includegraphics[width=0.45\textwidth]{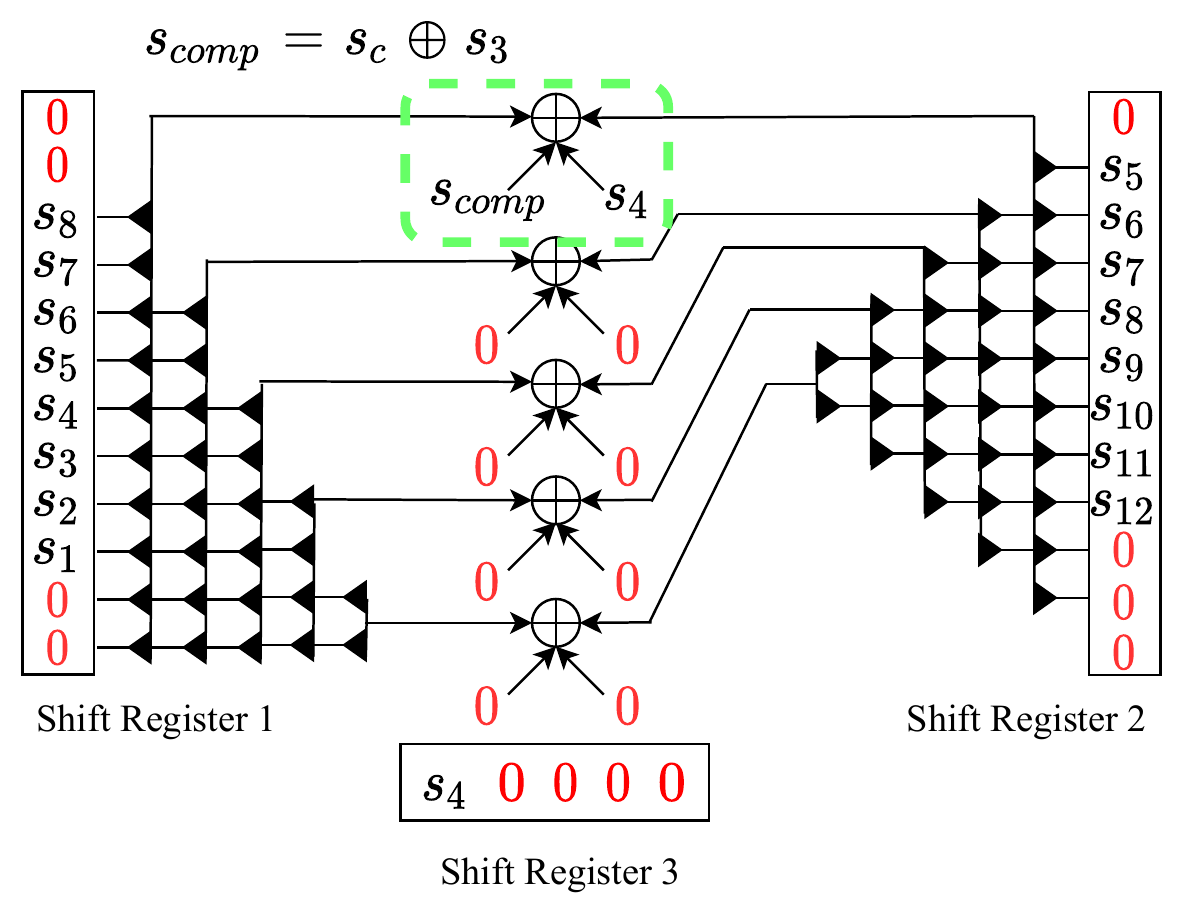}
     }
     \caption{Shift registers contents for checking test error patterns corresponding to to $P=4$ ($m=20$).}
     \label{fig:registersP4}
   \end{figure}
\begin{figure}[!ht]
  \centering
     \subfloat[Shift registers contents for checking test error patterns corresponding to $P=5$ at time step 1.\label{subfig-1:registersP51}]{%
     \centering
       \includegraphics[width=0.45\textwidth]{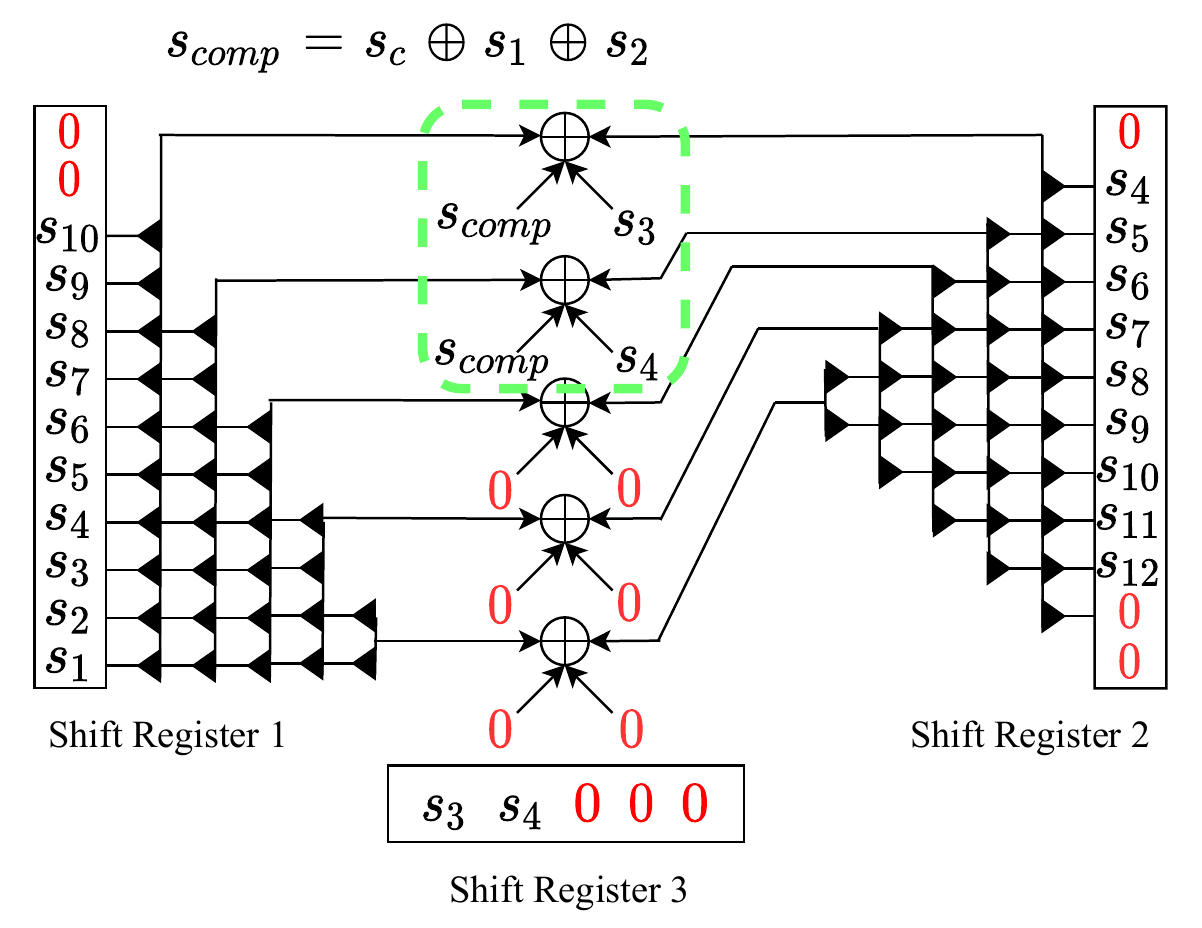}
     }
     \hfill
     \subfloat[Shift registers contents for checking test error patterns corresponding to $P=5$ at time step 2.\label{subfig-1:registersP52}]{%
     \centering
       \includegraphics[width=0.45\textwidth]{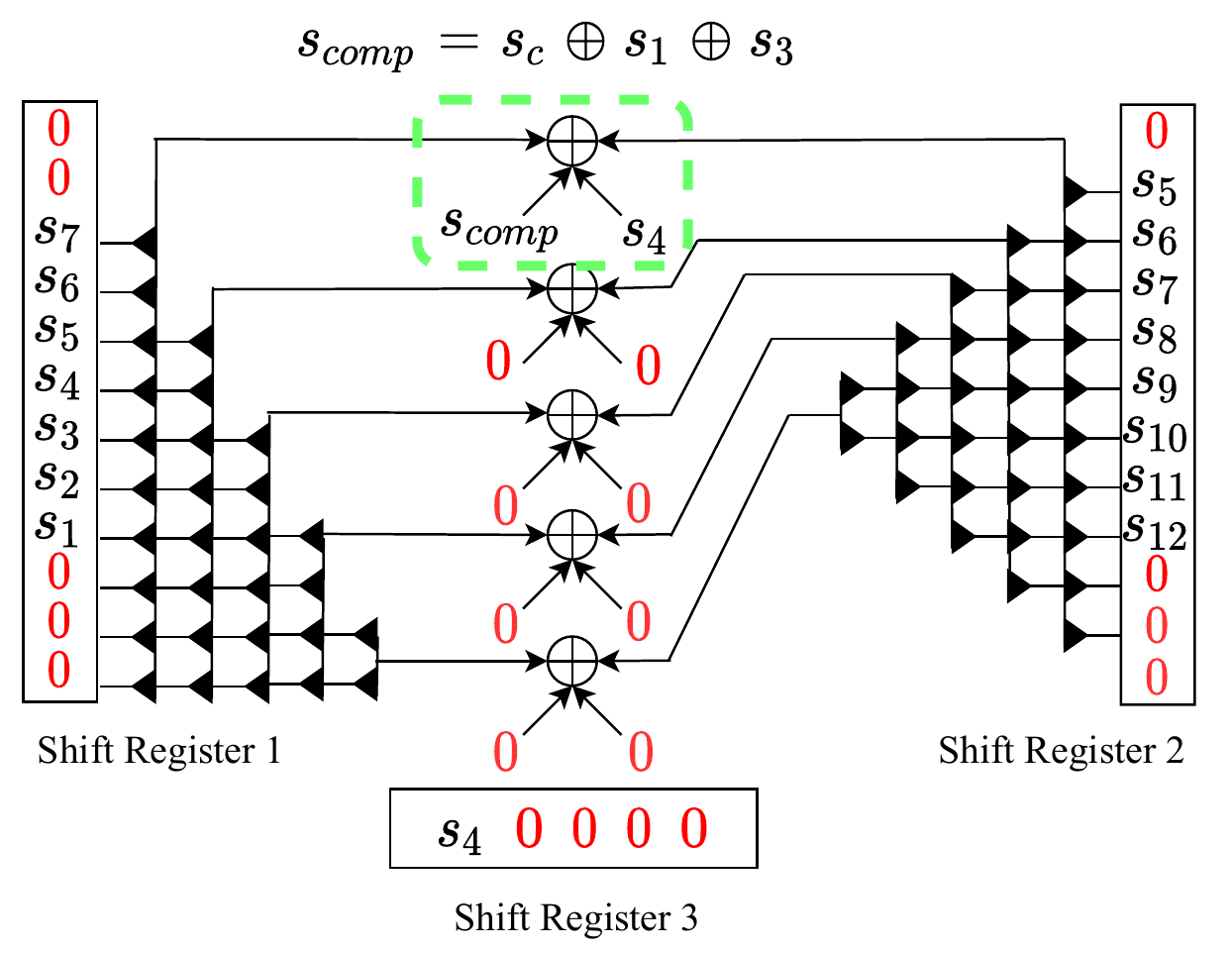}
     }
     \hfill
     \subfloat[Shift registers contents for checking test error patterns corresponding to $P=5$ at time step 3.\label{subfig-2:registersP53}]{%
     \centering
       \includegraphics[width=0.45\textwidth]{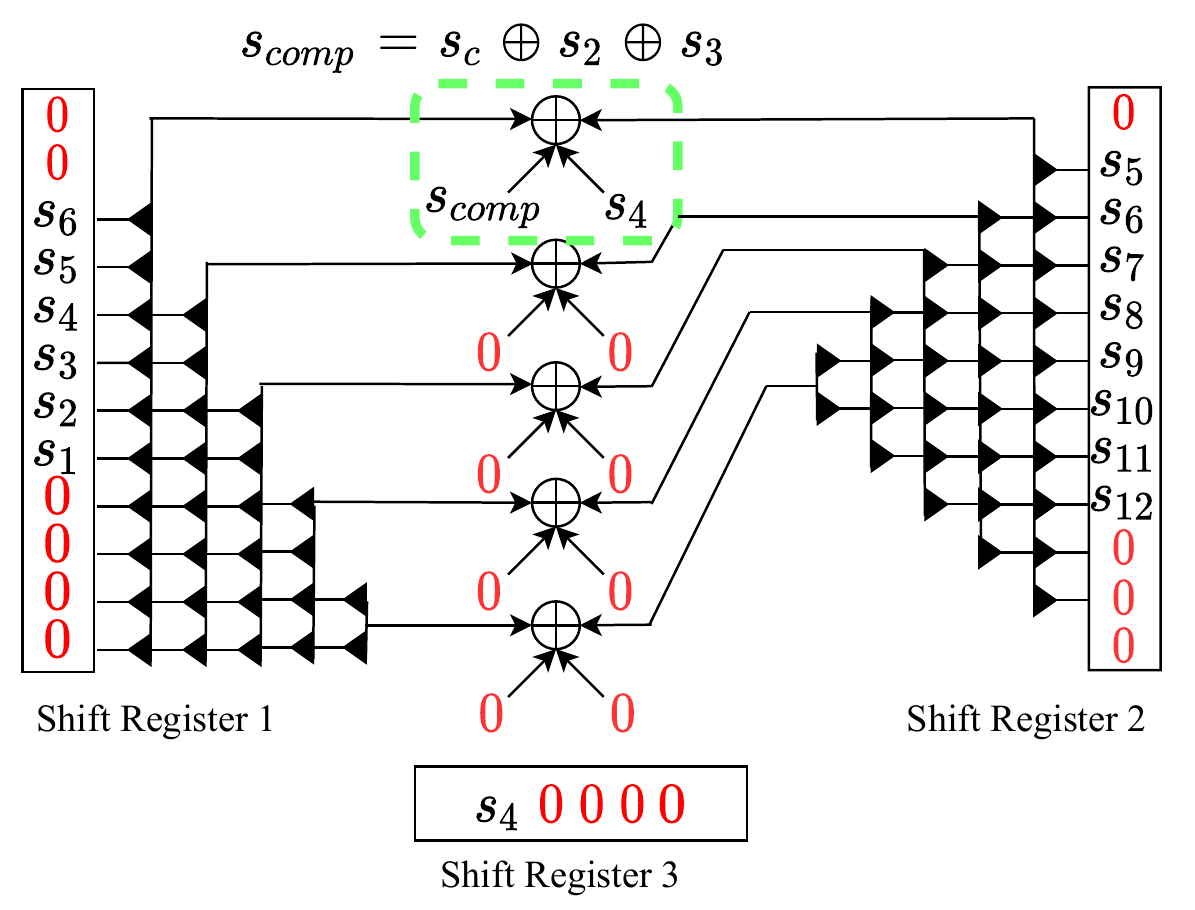}
     }
     \caption{Shift registers contents for checking test error patterns corresponding to $P=5$ ($m=20$).}
     \label{fig:registersP5}
   \end{figure}
Due to the described arrangement and interconnection of the shift registers and XOR gates, all the error patterns corresponding to an integer partition of sizes 2 and 3 for a specific logistic weight $m$ are checked in one time-step. In general, to check the error patterns corresponding to an integer partition of sizes 2 and 3 for any logistic weight $m$, the content and the interconnection of the three shift registers are depicted in Fig. \ref{fig:registersPN1}.

\subsection{Generating test error patterns for $P>3$}

To check all the test error patterns corresponding to integer partitions of sizes $P>3$, a controller is used in conjunction with the shift registers. The controller combines $P_\text{max} - 3$ syndromes together with the syndromes of the received vector, noted $\bm{s}_{comp}$. Hence, when $\bm{s}_{comp}$ is fixed, only one time-step is required to generate all possible combinations of $\{\lambda_1, \lambda_2, \lambda_3\}$ using the shift registers with adequately chosen shift values. 

The content and the interconnection of the three shift registers, which are used to check the test error patterns corresponding to integer partitions of sizes $P>3$, are depicted in Fig. \ref{fig:registersPN2}. Since the first bus is only used to check error patterns with Hamming weight of $2$ ($P=2$), it is disabled for $P>3$ and not shown in Fig. \ref{fig:registersPN2}. A $\bm{0}$ corresponds to a disabled connection, which means the respective elements of the bus, do not take part in the final computations. Fig. \ref{fig:registersP4} illustrates testing error patterns corresponding to $P=4$. At each time step, the controller outputs $\bm{s}_{comp}=\bm{s}_c\oplus\bm{s}_{\lambda_4}$ ($\lambda_4\in [1, \lambda_4^{max}]$) and $\{\lambda_1, \lambda_2, \lambda_3\}$ are computed and mapped to their corresponding shift registers.

At the first time step, having received $\bm{s}_{comp}=\bm{s}_c\oplus\bm{s}_1$ ($\lambda_4=1$) as an output from the controller, $\lambda_3$  ($\lambda_3\in [2, \lambda_{3,\lambda_4}^{max}]$ where $\lambda_{3,\lambda_4}^{max}=5$ with $\lambda_4=1$ $(\ref{eq:lambda_bound})$) is computed and mapped to the third shift register. Similarly, $\lambda_2$ ($\lambda_2\in [\lambda_3+1, \lambda_{(2\times(\lambda_3^\text{max}+1)))}]$) and $\lambda_1$ ($\lambda_1 = m-\sum_{i=2}^{4}\lambda_i$) are computed and mapped to their corresponding shift registers. The test error patterns with $\lambda_4=1$ are checked as shown in Fig. \ref{subfig-1:registersP41}.
 
At the next time step, the controller outputs $\bm{s}_{comp}=\bm{s}_c\oplus\bm{s}_2$ ($\lambda_4=2$) and $\lambda_3$  ($\lambda_3\in [3, \lambda_{3,\lambda_4}^{max}]$ where $\lambda_{3,\lambda_4}^{max}=5$ with $\lambda_4=2$ $(\ref{eq:lambda_bound})$) is computed. Shift register 2 is shifted up by 1 position and shift register 1 outputs $\lambda_1$ $(\lambda_1 = m-\sum_{i=2}^{4}\lambda_i)$ as shown in Fig. \ref{subfig-2:registersP42}. Hence, the test error patterns with $\lambda_4=2$ are checked in the second time-step. Similarly, at third time step, the controller outputs $\bm{s}_{comp}=\bm{s}_c\oplus\bm{s}_3$, ($\lambda_4=3$) $\lambda_3$ ($\lambda_3\in [4, \lambda_{3,\lambda_4}^{max}]$ where $\lambda_{3,\lambda_4}^{max}=4$ with $\lambda_4=3$ $(\ref{eq:lambda_bound})$) is computed as shown in Fig. \ref{subfig-3:registersP43}. Therefore, a total of 3 time steps ($\lambda_4^{max}=3$, Eq. $\ref{eq:lambda_bound}$),  are required to check for error patterns corresponding to $P=4$ and  $m = 20$. 

Fig. \ref{fig:registersP5} depicts the use of shift registers to check the error patterns corresponding to $P=5$ and  $m = 20$. At each time step, the controller outputs $\bm{s}_{comp}=\bm{s}_c\oplus\bm{s}_{\lambda_5}\oplus\bm{s}_{\lambda_4}$. For each value of $\lambda_5$ ($\lambda_5\in [1, \lambda_5^{max}]$), $\lambda_4$ ($\lambda_4\in [\lambda_5+1, \lambda_{4,\lambda_5}^{max}]$) is computed. Similarly, for each value of $\lambda_4$, $\lambda_3$ ($\lambda_3\in [\lambda_4+1, \lambda_{3,\lambda_5,\lambda_4}^{max}]$), $\lambda_2$ ($\lambda_2\in [\lambda_3+1, \lambda_{(2\times(\lambda_3^\text{max}+1))}]$) and $\lambda_1$ ($\lambda_1 = m-\sum_{i=2}^{5}\lambda_i$) are computed and mapped to their corresponding shift registers. Hence, a total of $3$ time steps ($\sum_{\lambda_5=1}^{\lambda_{5}^\text{max}}\left(\sum_{\lambda_4={\lambda_5+1}}^{\lambda_{4,\lambda_5}^\text{max}}\left(1\right)\right)$, where $\lambda_{5}^\text{max}=2$, $\lambda_{4,\lambda_5=1}^\text{max}=3$ and $\lambda_{4,\lambda_5=2}^\text{max}=3$ $(\ref{eq:lambda_bound})$) are required to check for error patterns corresponding to $P=5$ and  $m = 20$ as shown in Fig. \ref{fig:registersP5} 
In general, the number of time steps required to generate all integer partitions of size $P>3$ for a specific logistic weight ($LW$) is given by:
\begin{equation}
\label{eq:nb_steps}
\sum_{\lambda_P=1}^{\lambda_{P}^\text{max}}\left(\sum_{\lambda_{P-1}=\lambda_P+1}^{\lambda_{{P-1},\lambda_P}^\text{max}}\left(\ldots\sum_{\lambda_{4}=\lambda_5+1}^{\lambda_{4,\lambda_5,\ldots,\lambda_P}^\text{max}}\left(1\right)\right)\right).  
\end{equation}
\begin{figure*}
\centering
  \includegraphics[width=0.85\linewidth]{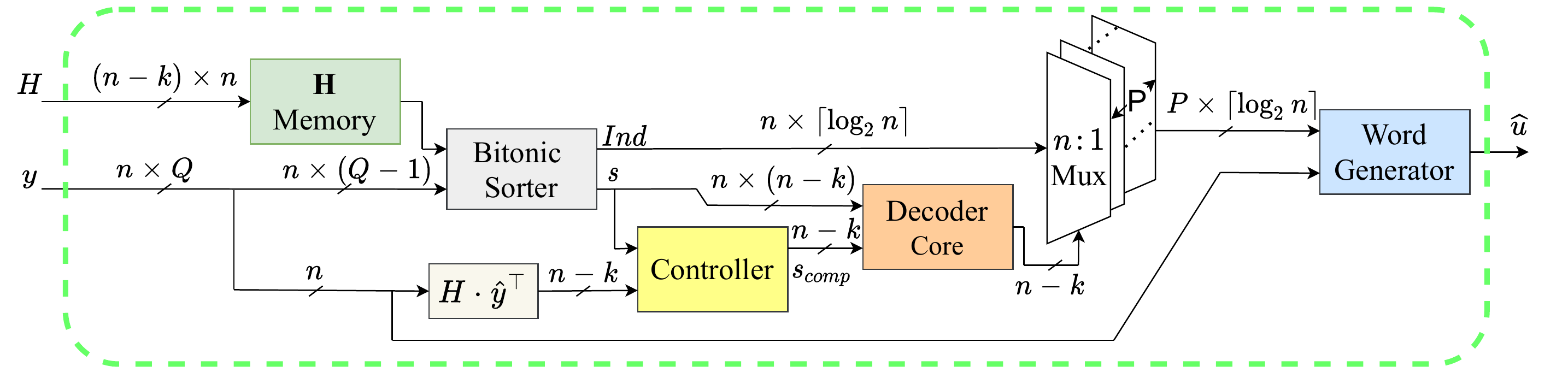}
  \caption{Proposed VLSI Architecture for ORBGRAND.}
  \label{fig:arch}
\end{figure*}
\subsection{Proposed VLSI architecture}
Figure~\ref{fig:arch} depicts the proposed VLSI architecture for ORBGRAND which can be used to decode any linear block code of length $n$. For clarity, the control and clock signals are not shown. To support different codes and rates, any $\bm{H}$  matrix can be loaded into the \textit{H memory} of size $(n-k)\times n\text{-bit}$ at any time. The hard decided vector $\hat{\bm{y}}$ is subjected to a syndrome check (\ref{eq:constraint}) in the first phase of decoding. Decoding is assumed to be successful if the syndrome is verified. Otherwise, the LLRs values are sorted in ascending order of their absolute value $\vert\bm{y}\vert$.

As depicted in Fig.~\ref{fig:arch}, the sorted syndromes of error patterns with Hamming weight of $1$ ($\bm{s_i}$) are passed to the \textit{decoder core}, while the indices of the sorted LLRs are forwarded to the multiplexers for later use by the \textit{word generator} module. Following the sorting process, all syndromes of error patterns with Hamming weight of $1$ ($\bm{s_i}$) are tested for codebook membership (\ref{eq:constraint}) in a single time-step. Following that, error patterns are tested for codebook membership in ascending logistic weight ($LW$) order as explained in section \ref{sec:ORBGRANDdecoding}. 

The test error pattern syndromes corresponding to integer partitions of a given logistic weight $m$ are generated using the shift register and XOR gate arrangement proposed in section \ref{sec:Scheduling}. The rows of shift registers are combined with the controller's  output ($\bm{s}_{comp}$), and the resulting test syndromes are NOR-reduced and fed to a 2D priority encoder. Each NOR-reduce output is 1 if and only if all of the bits of the syndromes computed by  (\ref{eq:constraint}) are 0. If any of the tested syndrome combinations satisfy the parity check constraint (NOR-reduced output is 1), the 2D priority encoder is used in conjunction with the \textit{controller} module to forward the respective indices to the word generator module, where $P$ multiplexers are used to convert the sorted index values to their appropriate bit-flip locations.

\begin{figure}[!t]
\centering
  \begin{tikzpicture}
    \begin{groupplot}[group style={group name=lat_tp, group size= 2 by 1, horizontal sep=10pt, vertical sep=10pt}, 
                      footnotesize,
                      height=.6\columnwidth,  width=.55\columnwidth,
                      xlabel=LW,
                      xmin=32, xmax=128, xtick={20,40,...,140},
                      ymode=log,
                      tick align=inside, 
                      grid=both, grid style={gray!30},
             ]

      \nextgroupplot[ylabel= W.C. Latency (cycles), ytick pos=left, y label style={at={(axis description cs:-0.15,.5)},anchor=south}, ymax = 1e6]
        \addplot[ Paired-9, semithick]  table[x=LW, y=Lat] {data_EbNo/Polar/latency/Latency_HW_4.txt};\label{gp:plot1}
        \addplot[Paired-7, semithick]  table[x=LW, y=Lat] {data_EbNo/Polar/latency/Latency_HW_5.txt};\label{gp:plot2}
        \addplot[Paired-5, semithick]  table[x=LW, y=Lat] {data_EbNo/Polar/latency/Latency_HW_6.txt};\label{gp:plot3}
        \addplot[ Paired-3, semithick]  table[x=LW, y=Lat]  {data_EbNo/Polar/latency/Latency_HW_7.txt};\label{gp:plot4}
        \addplot[Paired-1, semithick]  table[x=LW, y=Lat]  {data_EbNo/Polar/latency/Latency_HW_8.txt};\label{gp:plot5}
        \coordinate (top) at (rel axis cs:0,1);

      \nextgroupplot[ylabel=W.C. T/P (Mbps), ytick pos=right,y label style={at={(axis description cs:1.33,.5)},anchor=south}, ,ymin=5e-2, ymax = 1e3]
        \addplot[ Paired-9, semithick]  table[x=LW, y=TP] {data_EbNo/Polar/latency/Latency_HW_4.txt};
        \addplot[ Paired-7, semithick]  table[x=LW, y=TP] {data_EbNo/Polar/latency/Latency_HW_5.txt};
        \addplot[ Paired-5, semithick]  table[x=LW, y=TP] {data_EbNo/Polar/latency/Latency_HW_6.txt};
        \addplot[ Paired-3, semithick]  table[x=LW, y=TP]  {data_EbNo/Polar/latency/Latency_HW_7.txt};
        \addplot[ Paired-1, semithick]  table[x=LW, y=TP]  {data_EbNo/Polar/latency/Latency_HW_8.txt};

        \coordinate (bot) at (rel axis cs:1,0);
    \end{groupplot}
    \node[below = 1cm of lat_tp c1r1.south] {(a) : W.C. Latency};
    \node[below = 1cm of lat_tp c2r1.south] {(b) : W.C. Info. Throughput};
    \path (top|-current bounding box.north) -- coordinate(legendpos) (bot|-current bounding box.north);
    \matrix[
        matrix of nodes,
        anchor=south,
        draw,
        inner sep=0.2em,
        draw
      ]at(legendpos)
      {
        \ref{gp:plot1}& \footnotesize P = 4 &[5pt]
        \ref{gp:plot2}& \footnotesize P = 5  
        \ref{gp:plot3}& \footnotesize P = 6  &[5pt]
        \ref{gp:plot4}& \footnotesize P = 7 
        \ref{gp:plot5}& \footnotesize P = 8 \\};
  \end{tikzpicture}
  \caption{\label{fig:lat_tp} Worst-Case (W.C.) latency and W.C. information throughput for the proposed ORBGRAND architecture with various parameters ($LW$, $P$).}
\end{figure}
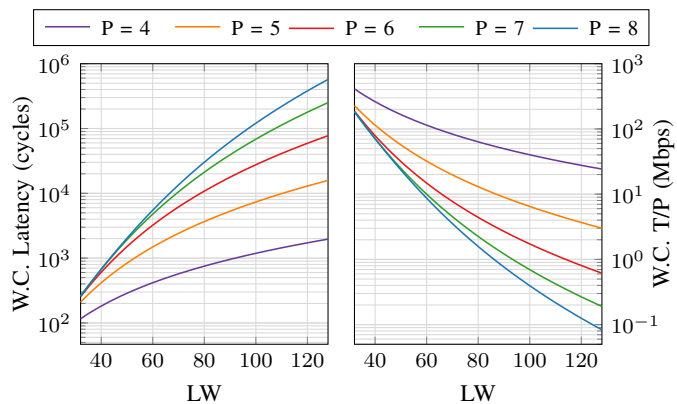

\begin{table*}[t]
\centering
\caption{\label{table:tableORBGRAND_128}TSMC 65 nm CMOS Synthesis Comparison for ORBGRAND with GRANDAB and DSCF for $n=128$.}
\begin{adjustbox}{max width=\textwidth}
\begin{tabular}{lrrrrrr}
\toprule
                             & GRANDAB \cite{GRANDAB-VLSI}  & \multicolumn{4}{c}{ORBGRAND}   & DSCF \cite{Ercan_2020}        \\
                             \cmidrule(l){2-2}\cmidrule(l){3-6}\cmidrule(l){7-7}
Parameters                   & AB=3    & LW$\leq$64, P$\leq$6 & LW$\leq$96, P$\leq$8  & LW$\leq$96, P$\leq$8, $S=2$ & LW$\leq$96, P$\leq$8, $S=4$  & $\omega = 2$, $T_\text{max}=50$ \\  
Technology (nm)              & 65      & 65                   & 65             & 65     & 65     & 65                   \\
Supply (V)                   & 0.9     & 0.9                  & 0.9            & 0.9    & 0.9    & 0.9                   \\
Max. Frequency (MHz)         & 500     & 454                  & 454            & 454    & 454    & 426                   \\
Area (mm\textsuperscript{2}) & 0.25    & 1.82                 & 2.25           & 2.08   & 1.85   & 0.22                  \\
W.C. Latency ($\mu$s)        & 8.196   & 9.30                 & 205.76         & 205.76 & 205.76 & 6.103                 \\
Avg. Latency (ns)            & 2       & 2.47                 & 2.47           & 2.47   & 2.47   & 122                   \\
W.C. T/P (Mbps)$^a$              & 12.8    & 11.3                 & 0.51          & 0.51  & 0.51  & 17.2                  \\
Avg. T/P (Gbps)$^a$              & 52.5    & 42.5                 & 42.5           & 42.5   & 42.5   & 0.86                  \\
Power (mW)                   & 46      & 104.3                & 133            & 131.3  & 130    & 68.51                 \\
Energy per Bit (pJ/bit)$^b$                & 0.87    & 2.45                 & 3.13            & 3.09   & 3.0    & 79.6                   \\
Area Efficiency (Gbps/mm\textsuperscript{2})$^c$ & 210   & 23.3  & 18.9            & 20.4   & 23    & 3.9                      \\
Code compatible              & Yes     & Yes                  & Yes            & Yes    & Yes   & No                       \\
\bottomrule
\multicolumn{7}{l}{\footnotesize \textcolor{black} {$^a$ $\text{Information Throughput (Gbps)}=\frac{{k}}{\text{Decoding Latency (ns)}}$}, \textcolor{black} {$^b$ $\text{Energy per Bit (pJ/bit)}=\frac{\text{Power (mW)}}{\text{Avg. Throughput (Gbps)}}$}, \textcolor{black} {$^c$ $\text{Area Efficiency (Gbps/mm\textsuperscript{2})}=\frac{\text{Avg. Throughput (Gbps)}}{\text{Area (mm\textsuperscript{2})}}$} } \\
\end{tabular}
\end{adjustbox}
\end{table*}

\begin{table}[t]
\centering
\caption{\label{table:tableSorter} Displacement of LLR elements $\bm{y}_i$ ($\forall i \in [1,n]$) from their correct locations with segmented sorter}
\begin{adjustbox}{max width=\columnwidth}
\begin{tabular}{ccccc}
\toprule
               & \multicolumn{4}{c}{\textit{\# of segments for bitonic sorter} } \\
\textit{Displacement}    & $S=2$       & $S=4$       & $S=8$       & $S=16$      \\
$=0$                      & $10.31\%$   & $5.98\%$    & $3.87\%$    & $2.59\%$    \\
$\leq1$                   & $29.40\%$   & $17.42\%$   & $11.40\%$   & $7.67\%$   \\
$\leq2$                   & $45.50\%$   & $28.18\%$   & $18.67\%$   & $12.65\%$   \\
$\leq3$                   & $58.76\%$   & $38.05\%$   & $25.67\%$   & $17.50\%$   \\
$\leq5$                   & $77.84\%$   & $54.62\%$   & $38.65\%$   & $26.85\%$   \\
$\leq10$                  & $96.89\%$   & $81.64\%$   & $63.82\%$   & $47.68\%$   \\
$\leq20$                  & $99.99\%$   & $98.34\%$   & $90.09\%$   & $75.95\%$   \\
$\leq30$                  & $100\%$     & $99.94\%$   & $98.10\%$   & $90.58\%$  \\
\bottomrule
\end{tabular}
\end{adjustbox}
\end{table}

\begin{figure}
  \centering
  \includegraphics[width=0.5\textwidth]{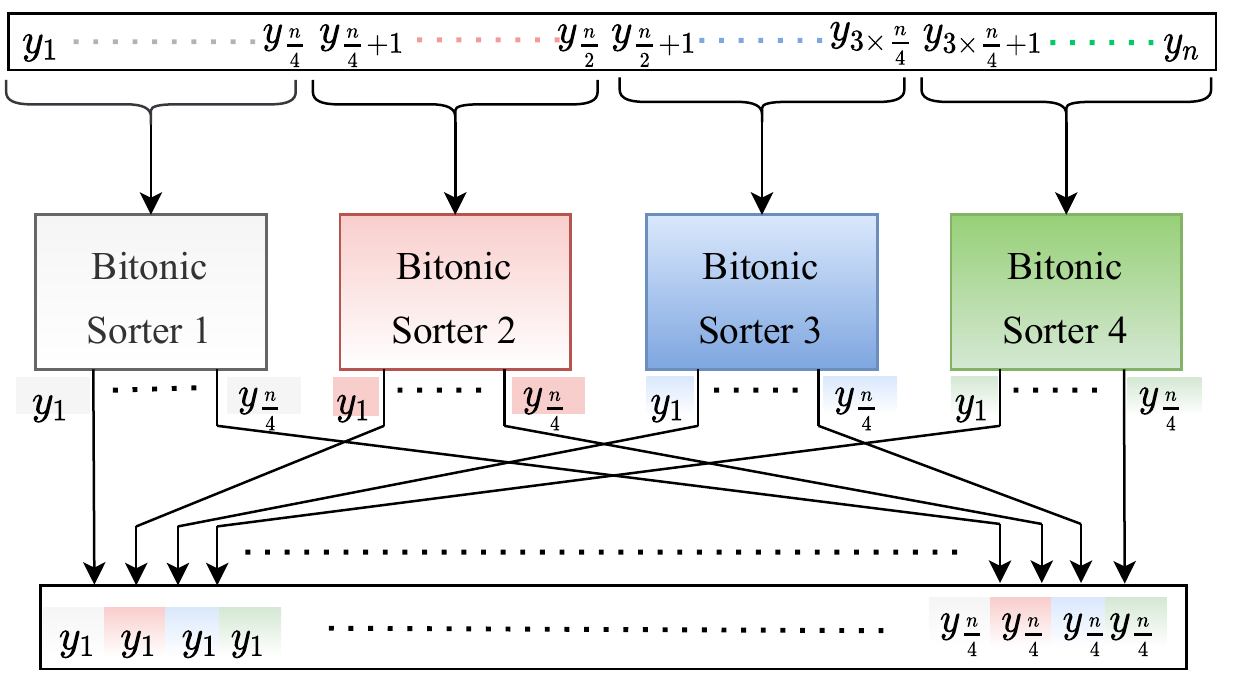}
  \caption{Proposed segmented sorter ($S=4$) for ORBGRAND.}
  \label{fig:segSorter} 
\end{figure}

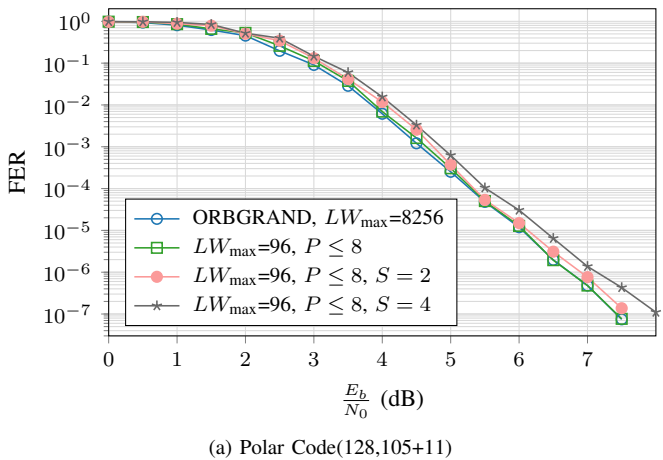
\begin{figure}[!t]
    \centering
    \captionsetup[subfigure]{oneside}
    \subfloat[Polar Code(128,105+11)]{
    \begin{tikzpicture}
    \begin{semilogyaxis}[
            footnotesize, width=\columnwidth, height=.67\columnwidth,    
            xmin=0, xmax=8, xtick={0,1,...,7},
            ymin=3e-8,  ymax=2,
            xlabel=$\frac{E_b}{N_0}$ (dB), ylabel=FER,  
            grid=both, grid style={gray!30},
            tick align=outside, tickpos=left, 
            legend pos=south west, 
            legend cell align={left},
            /pgfplots/table/ignore chars={|},
            mark options={solid},
        ]
        \addplot[mark=o       , Paired-1 , semithick]  table[x=Eb/N0, y=FER] {data_EbNo/Polar/128_105/ORBGRAND_LW8256.txt};
        \addplot[mark=square  , Paired-3 , semithick]  table[x=Eb/N0, y=FER] {data_EbNo/Polar/128_105/ORBGRAND_LW96_HW8.txt}; 
        \addplot[mark=*  , Paired-6, semithick]  table[x=Eb/N0, y=FER] {data_EbNo/Polar/128_105/ORBGRAND_LW96_HW8_S_2.txt};
        \addplot[mark=star  , Paired-12, semithick]  table[x=Eb/N0, y=FER] {data_EbNo/Polar/128_105/ORBGRAND_LW96_HW8_S_4.txt};
        \legend{{} {\footnotesize{{ORBGRAND}, $LW_\text{max}$=8256}},
        {} {\footnotesize{$LW_\text{max}$=96, $P \leq 8$}},
        {} {\footnotesize{$LW_\text{max}$=96, $P \leq 8$, $S = 2$}}, 
        {} {\footnotesize{$LW_\text{max}$=96, $P \leq 8$, $S = 4$}}, }
    \end{semilogyaxis}
    \end{tikzpicture}
}
\caption{Comparison of decoding performance of ORBGRAND decoding with different parameters ($LW_\text{max}$, $P$, $S$) for Polar Code(128,105+11)}
\label{fig:fer_polar_seg_sort}
\end{figure}

{\section{ORBGRAND Design Exploration}\label{sec:DSE}

In this section, we present the VLSI implementation results for ORBGRAND ($LW$,$P$). Initially, implementation results for baseline ORBGRAND, which can support parameters $LW \leq 64$ and $P \leq 6$, are presented. Please note that a subset of these implementation results were previously presented in \cite{ORBGRAND-VLSI}. In this work, baseline implementation results are supplemented with power consumption, area, and energy efficiency results. Furthermore, the proposed ORBGRAND VLSI architecture is extended to support the parameters $LW \leq 96$ and $P \leq 8$, and a comprehensive analysis of worst-case latency and worst-case throughput for selecting different parameters ($LW$, $P$) for the proposed ORBGRAND hardware is presented. Finally, the sorter module for the proposed ORBGRAND is segmented into multiple partitions to reduce the area overhead. The effect of the number of partitions on ORBGRAND decoding performance as well as area overhead is also presented and compared to the ORBGRAND with a non-segmented sorter approach.

\subsection{ORBGRAND baseline implementation } \label{sec:ImplementationBaseline}

The proposed ORBGRAND VLSI architecture with parameters LW$\leq$64 and P$\leq$6 has been implemented in Verilog HDL and synthesized using Synopsys Design Compiler with general-purpose TSMC 65 nm CMOS technology. The design has been verified using test benches generated via the bit-true C model of the proposed hardware. Table \ref{table:tableORBGRAND_128} presents the synthesis results for the proposed decoder with $n=128$ and the proposed architecture can support code rates between $0.75$ and $1$. Input channel LLRs are quantized on 5 bits, including 1 sign bit and 3 bits for the fractional part. To ensure accuracy in power measurements, switching activities from real test vectors are extracted for all of the VLSI architectures presented in Table \ref{table:tableORBGRAND_128}.

The maximum frequency supported by the ORBGRAND implementation is $454~\text{MHz}$. Since there is no pipelining technique for the decoder core, one clock cycle corresponds to one time-step. The average decoding latency of the proposed hardware is calculated using the bit-true C model after taking account for at least 100 frames in error for each $\frac{E_b}{N_0}$ point. At target FER of $10^{-7}$ ($\frac{E_b}{N_0}$$>7.5$ dB), the average latency is only $2.47$ns, resulting in an average decoding information throughput of $42.5$ Gbps for a (128,105) polar code.  However, the worst-case (W.C.) scenario needs $\numprint{4226}$ cycles with $n= 128$ and parameters LW$\leq$64 and P$\leq$6, culminating in a W.C. latency of 9.3$\mu$s.

As seen in Fig. \ref{fig:fer_polar_bch} (a), ORBGRAND with parameters $LW\leq64$ and $P\leq6$ has similar decoding performance (target FER of $10^{-5}$) to the Dynamic SC-Flip (DSCF) \cite{Chandesris} decoder for (128,105) 5G-NR CRC-Aided (CA) polar code. The proposed ORBGRAND VLSI implementation ($LW\leq64$ and $P\leq6$) is compared to VLSI architecture for DSCF polar code decoder ($\omega = 2$, $T_\text{max}=50$) \cite{Ercan_2020}, which employs 7 and 6 bit internal and channel LLR quantizations, respectively. Compared to DSCF \cite{Ercan_2020}, ORBGRAND (LW$\leq$64, P$\leq$6) has a $8\times$ area overhead, as well as a $52\%$ increase in the worst-case latency. However, at a target FER of $10^{-7}$, the proposed ORBGRAND results in $49\times$ higher average throughput than the DSCF \cite{Ercan_2020}. Furthermore, as compared to DSCF \cite{Ercan_2020}, ORBGRAND (LW$\leq$64, P$\leq$6) is $5\times$ more area efficient and $32\times$ more energy efficient. Moreover, the proposed ORBGRAND hardware is code and rate compatible, while the DSCF \cite{Ercan_2020} can only decode polar codes.

In comparison to the hard-input GRANDAB decoder (AB=3) \cite{GRANDAB-VLSI}, ORBGRAND (LW$\leq$64, P$\leq$6) has a $7\times$ area overhead, as well as a $13.5\%$ higher W.C. and a $23.5\% $ higher average latency. 
Furthermore, as compared to \cite{GRANDAB-VLSI}, ORBGRAND (LW$\leq$64, P$\leq$6) is $2\times$ less energy efficient and $9\times$ less area efficient. However, as seen in Fig. \ref{fig:fer_polar_bch}, the FER performance of ORBGRAND (LW$\leq$64, P$\leq$6), a soft decision decoder, outperforms hard-input counterpart decoders by at least $1.3\sim2$dB for target FERs $\leq10^{-5}$.

\subsection{Design expansion and latency analysis }

As illustrated in Fig. \ref{fig:fer_polar_bch}, the ORBGRAND with parameters $LW\leq96$ and $P\leq8$ has similar decoding performance to the ORBGRAND with parameters $LW_\text{max} = 8256$ and $LW_\text{max} = 8128$ ($LW_\text{max} \leq \frac{n(n+1)}{2}$ \cite{duffy2020ordered} ) for both $(128,105)$ Polar code and $(127,106)$ BCH code. Furthermore, at a target FER of $\leq10^{-7}$, ORBGRAND with parameters $LW\leq96$ and $P\leq8$ results in a $0.2\sim0.3$dB gain in decoding performance when compared to ORBGRAND with parameters $LW\leq64$ and $P\leq6$.

Table \ref{table:tableORBGRAND_128} presents the VLSI implementation results for the proposed ORBGRAND VLSI architecture with parameters $LW\leq96$ and $P\leq8$ using the same implementation settings described in section \ref{sec:ImplementationBaseline}. The proposed ORBGRAND can support a maximum frequency of $454~\text{MHz}$. As shown in Table \ref{table:tableORBGRAND_128}, the ORBGRAND implementation with parameters $LW\leq96$ and $P\leq8$  incurs a $23.6\%$ area overhead when compared to the ORBGRAND implementation with parameters $LW\leq64$ and $P\leq6$. Furthermore, the ORBGRAND implementation with parameters $LW\leq96$ and $P\leq8$ is $18.8\%$ less area efficient and $27.7\%$ less energy efficient than the ORBGRAND implementation with parameters $LW\leq64$ and $P\leq6$.

The ORBGRAND parameters $LW$ and $P$ influence the decoding performance as well as the worst-case decoding latency of the proposed ORBGRAND VLSI hardware. In the worst-case scenario, the ORBGRAND with parameters $LW\leq64$ and $P\leq6$ requires $\numprint{4226}$ cycles ($n= 128$), whereas the ORBGRAND with parameters $LW\leq96$ and $P\leq8$ requires $\numprint{93417}$ cycles, resulting in a worst-case latency of 205.76$\mu$s. Figure \ref{fig:lat_tp} (a) depicts the worst-case latency (in clock cycles (\ref{eq:nb_steps})) of the proposed ORBGRAND hardware for various $LW$ and $P$ parameter values. Fig. \ref{fig:lat_tp} (b) depicts the information throughout corresponding to $k = 105$ and a maximum frequency of $454~\text{MHz}$ for the proposed ORBGRAND ($LW, P$) hardware.

To conclude, the ORBGRAND implementation parameters $(LW,P)$ can be appropriately chosen to strike a balance between area overhead, energy budget, and decoding performance requirements for a target application.

\subsection{ORBGRAND area optimization } 
In this section, we investigate the sorter module in the proposed ORGRAND $(LW,P)$ VLSI implementation and propose segmenting the sorter module into multiple partitions to reduce the area overhead of ORBGRAND hardware. The ORBGRAND decoding procedure, as described in Algorithm \ref{alg:ORBgrand}, begins by sorting channel LLRs ($\bm{y}$) in ascending order of their absolute value ($\vert\bm{y}\vert$). Any sorter \cite{sortingAlgo} (\textit{insertion sorter}, \textit{merge sorter}, \textit{bubble sorter}) may be used to sort $\vert\bm{y}\vert$ for the proposed ORBGRAND hardware. The sorter choice is determined by the target application's budget in terms of decoding latency and hardware overhead. On one end of the spectrum, we have sequential sorters with high latency but low hardware implementation cost, while on the other, we have parallel sorters with low latency but high hardware implementation cost.

The proposed ORBGRAND VLSI implementation employs a \textit{bitonic sorter} \cite{batcher68} of length $n$ that is pipelined to $\log_2(n)$ stages. As a result, the sorting procedure requires just $\log_2(n)$ clock cycles. The \textit{bitonic sorter} module of the proposed ORBGRAND VLSI implementation can be partitioned into multiple segments to reduce the ORBGRAND implementation's area overhead. The size and number of partitions influence ORBGRAND decoding performance as well as the area overhead of the proposed ORBGRAND VLSI implementation.
A \textit{bitonic sorter} module of length $n$ is segmented into $S$ segments, each having a size $\frac{n}{S}$, for the proposed segmented sorter approach. Please note that the number of segments $S$ should be chosen in such a way that the size of each segment $\frac{n}{S}$ is an integer. The segmented \textit{bitonic sorter} is depicted in Fig. \ref{fig:segSorter}, which employs four sorters ($S=4$) of length $\frac{n}{4}$, each of which receives an unique subset of channel LLRs ($\bm{y}$). To generate the final LLRs, the sorted LLRs from individual sorters are concatenated. The first four elements of the final sorted LLRs are comprised of the first element of the output of each sorter. Similarly, the second element of each sorter's output occupies the following four positions of the final sorted LLRs. This procedure is continued until the last elements of each sorter's output are placed in the last four positions of the sorted LLRs, as shown in Fig. \ref{fig:segSorter}.

A non-segmented sorter will sort the LLR elements $\vert\bm{y}_i\vert$ ($\forall i \in [1,n]$) to their correct location. However, compared to a non-segmented sorter, the sorted LLRs using a segmented sorter ($S>1$) will have elements that are displaced from their correct locations. To investigate the effect of segments on the displacement of LLR elements from their correct locations, we performed Monte-Carlo simulations and measured the percentage of LLR elements that were within a specified distance of their correct location. Table \ref{table:tableSorter} compares the displacement of LLR elements from their correct locations with varying numbers of segments employed in the segmented-sorter approach. Table \ref{table:tableSorter} shows that as the number of segments decreases, more elements are concentrated closer to their correct locations, but as the number of segments increases, LLR elements are concentrated further away from their correct locations.

Fig. \ref{fig:fer_polar_seg_sort} depicts the FER performance of implementing a segmented sorter approach for ORBGRAND (LW$\leq$96, P$\leq$8) decoding of (128,105) polar code. As seen in the Figure \ref{fig:fer_polar_seg_sort}, the ORBGRAND using the segmented sorter with $S=2$ and $S=4$ suffers from a FER performance degradation of $0.1$dB and $0.3$dB respectively, at the target FER of $10^{-6}$, as compared to ORBGRAND with a non-segmented sorter. Table \ref{table:tableORBGRAND_128} compares VLSI implementation results for the ORBGRAND (LW$\leq$96, P$\leq$8) using the non-segmented sorter to the proposed ORBGRAND (LW$\leq$96, P$\leq$8) using segmented sorter approach. As shown in Table \ref{table:tableORBGRAND_128}, the proposed ORBGRAND (LW$\leq$96, P$\leq$8) with non-segmented sorter incurs an area overhead of $8\%$ and $21.6\%$, respectively, when compared to ORBGRAND with segmented sorter parameters $S=2$ and $S=4$. To conclude, the number of sorter segments influences both decoding performance and the area overhead of the ORBGRAND hardware; they can be chosen appropriately to strike a balance between decoding performance requirements and area overhead for a target application.

\section{Conclusion}
In this work, we present a hardware architecture for the ORBGRAND algorithm. ORBGRAND is a soft input GRAND variant that generates test error patterns in a fixed logistic weight order, rendering it suitable for parallel hardware implementation. Due to the code-agnostic nature of the GRAND and its variants, the proposed ORBGRAND architecture can decode any code as long as the length and rate constraints are met. We suggest modifications in the ORBGRAND algorithm to simplify the hardware implementation and reduce the decoding complexity.  Furthermore, the proposed ORBGRAND VLSI architecture uses parameters that can be tweaked to meet the optimal decoding performance as well as the decoding latency for a specific application. According to ASIC synthesis results, an average decoding throughput of $42.5$ Gbps can be achieved for a code length of $128$ and a target FER of $10^{-7}$. The proposed VLSI architecture improves decoding performance by at least $2$ dB over the GRANDAB, a hard-input variant of GRAND. In comparison to the state-of-the-art DSCF hardware decoder for 5G $(128,105)$ polar code, the proposed VLSI implementation achieves $49\times$ higher decoding throughput, $32\times$ higher energy efficiency and $5\times$ higher area efficiency. Finally, the proposed architecture is the first step toward implementing GRAND family soft-input decoders in hardware.


%

\appendices
\section{PROOF OF LEMMA 1}
\begin{proof}
 It is sufficient to show that for all $i$ ($i \in [2, P]$) \\ $\lambda_i^\text{max} < \frac{2\times m - (i\times(i-1))+2-2\times\sum\limits_{j=i+1}^{P}\lambda_{j}}{2\times i}$. We use induction on $i$.

$\text{Base case ($i=2$)}$: 

\begin{align*} 
\lambda_2^\text{max} < \frac{m - \sum\limits_{j=3}^{P}\lambda_{j}}{2} \\
\end{align*}
$\text{Since the $\lambda_i$ are ordered $\therefore$ $\lambda_2<\lambda_1$  }$
\begin{align*} 
\Rightarrow \lambda_2^\text{max} &< m-\sum\limits_{j=2}^{P}\lambda_{j} \text{ } (\because\lambda_1=m-\sum\limits_{j=2}^{P}\lambda_{j}) \\
\Rightarrow \lambda_2^\text{max} &< m-\sum\limits_{j=3}^{P}\lambda_{j}-\lambda_2^\text{max} \\
\therefore \lambda_2^\text{max} &< \frac{m - \sum\limits_{j=3}^{P}\lambda_{j}}{2}
\end{align*}
$\text{Inductive hypothesis ($i=k$)}$: 
\begin{align*} 
\lambda_k^\text{max} < \frac{2\times m - (k\times(k-1))+2-2\times\sum\limits_{j=k+1}^{P}\lambda_{j}}{2\times k}
\end{align*}
$\text{Inductive step ($i=k+1$)}$: 
\begin{align*} 
\lambda_{k+1}^\text{max} < \frac{2\times m - (k\times(k+1))+2-2\times\sum\limits_{j=k+2}^{P}\lambda_{j}}{2\times(k+1)}
\end{align*}
$\because$ $\text{Inductive hypothesis }$ 
\begin{align*} 
\lambda_k^\text{max} &< \frac{2\times m - (k\times(k-1))+2-2\times\sum\limits_{j=k+1}^{P}\lambda_{j}}{2\times k} 
\end{align*}
\begin{align*} 
\Rightarrow \lambda_k^\text{max} &< \frac{2\times m - (k\times(k+1))+2-2\times\sum\limits_{j=k+1}^{P}\lambda_{j}}{2\times k} + 1 
\end{align*}
\begin{align*} 
\text{ } (\because\text{}k\times(k-1)=k\times(k+1)+2\times(k))
\end{align*}
\begin{align*} 
\Rightarrow \lambda_k^\text{max} &< \frac{2\times m - (k\times(k+1))+2-2\times\sum\limits_{j=k+2}^{P}\lambda_{j}}{2\times k} \\&+ 1 - \frac{\lambda_{k+1}^\text{max}}{k} 
\end{align*}
\begin{align*} 
\text{ } (\because\sum\limits_{j=k+1}^{P}\lambda_{j}=\lambda_{k+1}^\text{max} + \sum\limits_{j=k+2}^{P}\lambda_{j})
\end{align*}

\begin{align*} 
\Rightarrow &\frac{\lambda_{k+1}^\text{max}+\lambda_k^\text{max} - k}{k} \\ &< \frac{2\times m - (k\times(k+1))+2-2\times\sum\limits_{j=k+2}^{P}\lambda_{j}}{2\times k}\\
\end{align*}
\begin{align*} 
\Rightarrow &\frac{\lambda_{k+1}^\text{max}+\lambda_k^\text{max} - k}{k+1} \\ &< \frac{2\times m - (k\times(k+1))+2-2\times\sum\limits_{j=k+2}^{P}\lambda_{j}}{2\times (k+1)} \labeln{eq:L1} \\
\end{align*}
\begin{align*} 
\text{Since $\lambda_i$ are ordered} \therefore \lambda_{k+1} &< \lambda_{k} 
\end{align*}
\begin{align*} 
\Rightarrow \lambda_{k+1}^\text{max}+1 &\leq \lambda_{k}^\text{max} \text{  } (\because \lambda_{k+1}^\text{max}+1 \leq \lambda_{k} \leq \lambda_{k}^\text{max} ) 
\end{align*} 
\begin{align*} 
\Rightarrow  k\times\lambda_{k+1}^\text{max}+k + \lambda_{k+1}^\text{max} &\leq k\times\lambda_{k}^\text{max} + \lambda_{k+1}^\text{max}   
\end{align*}
\begin{align*} 
\Rightarrow \lambda_{k+1}^\text{max} &\leq \frac{\lambda_{k+1}^\text{max}+\lambda_k^\text{max} - k}{k+1} 
\end{align*}
\begin{align*} 
\text{ Using \eqref{eq:L1}: } &\lambda_{k+1}^\text{max} \leq\frac{\lambda_{k+1}^\text{max}+\lambda_k^\text{max} - k}{k+1} 
\end{align*}
\begin{align*} 
 &< \frac{2\times m - (k\times(k+1))+2-2\times\sum\limits_{j=k+2}^{P}\lambda_{j}}{2\times (k+1)}
\end{align*}
\begin{align*} 
\Rightarrow  &\lambda_{k+1}^\text{max} < \frac{2\times m - (k\times(k+1))+2-2\times\sum\limits_{j=k+2}^{P}\lambda_{j}}{2\times (k+1)}
\end{align*}
\begin{align*} 
(\because &a \leq b < c \Rightarrow a < c) 
\end{align*}

\end{proof}

\ifCLASSOPTIONcaptionsoff
  \newpage
\fi



%



\bibliographystyle{IEEEbib}
\bibliography{refs}

\begin{thebibliography}{10}

\bibitem{ORBGRAND-VLSI}
S.~M. {Abbas}, T.~{Tonnellier}, F.~{Ercan}, M.~{Jalaleddine}, and W.~J.
  {Gross},
\newblock ``High-{T}hroughput {VLSI} {A}rchitecture for {S}oft-{D}ecision
  {D}ecoding with {ORBGRAND},''
\newblock in {\em ICASSP 2021 - 2021 IEEE International Conference on
  Acoustics, Speech and Signal Processing (ICASSP)}, 2021, pp. 8288--8292.

\bibitem{kaiserslautern}
Michael Helmling, Stefan Scholl, Florian Gensheimer, Tobias Dietz, Kira Kraft,
  Stefan Ruzika, and Norbert Wehn,
\newblock ``{D}atabase of {C}hannel {C}odes and {ML} {S}imulation {R}esults,''
  {www.uni-kl.de/channel-codes}, 2019.

\bibitem{Shannon48}
C.~E. Shannon,
\newblock ``A mathematical theory of communication,''
\newblock {\em The Bell System Technical Journal}, vol. 27, no. 3, pp.
  379--423, 1948.

\bibitem{Hamming50}
R.~W. Hamming,
\newblock ``Error detecting and error correcting codes,''
\newblock {\em The Bell System Technical Journal}, vol. 29, no. 2, pp.
  147--160, 1950.

\bibitem{Hocquenghem59}
Alexis Hocquenghem,
\newblock ``Codes correcteurs d'erreurs,''
\newblock {\em Chiffres}, 1959.

\bibitem{Bose1960}
R.C. Bose and D.K. Ray-Chaudhuri,
\newblock ``On a class of error correcting binary group codes,''
\newblock {\em Information and Control}, vol. 3, no. 1, pp. 68--79, 1960.

\bibitem{turbo}
C.~{Berrou}, A.~{Glavieux}, and P.~{Thitimajshima},
\newblock ``Near shannon limit error-correcting coding and decoding:
  Turbo-codes. 1,''
\newblock in {\em Proceedings of ICC '93 - IEEE International Conference on
  Communications}, 1993, vol.~2, pp. 1064--1070 vol.2.

\bibitem{Gallager62}
R.~Gallager,
\newblock ``Low-density parity-check codes,''
\newblock {\em IRE Transactions on information theory}, vol. 8, no. 1, pp.
  21--28, 1962.

\bibitem{Arikan09}
E.~{Arikan},
\newblock ``Channel polarization: A method for constructing capacity-achieving
  codes for symmetric binary-input memoryless channels,''
\newblock {\em IEEE Transactions on Information Theory}, vol. 55, no. 7, pp.
  3051--3073, 2009.

\bibitem{Arikan2}
E.~{Şaşoğlu}, E.~{Telatar}, and E.~{Arikan},
\newblock ``Polarization for arbitrary discrete memoryless channels,''
\newblock in {\em 2009 IEEE Information Theory Workshop}, 2009, pp. 144--148.

\bibitem{cocskun2019efficient}
M.~C. Co{\c{s}}kun, G.~Durisi, T.~Jerkovits, G.~Liva, W.~Ryan, B.~Stein, and
  F.~Steiner,
\newblock ``Efficient error-correcting codes in the short blocklength regime,''
\newblock {\em Physical Communication}, vol. 34, pp. 66--79, 2019.

\bibitem{Duffy19TIT}
K.~R. {Duffy}, J.~{Li}, and M.~{M{\'e}dard},
\newblock ``Capacity-achieving guessing random additive noise decoding,''
\newblock {\em IEEE Transactions on Information Theory}, vol. 65, no. 7, pp.
  4023--4040, 2019.

\bibitem{URLLC1}
I.~{Parvez}, A.~{Rahmati}, I.~{Guvenc}, A.~I. {Sarwat}, and H.~{Dai},
\newblock ``A survey on low latency towards 5g: Ran, core network and caching
  solutions,''
\newblock {\em IEEE Communications Surveys Tutorials}, vol. 20, no. 4, pp.
  3098--3130, 2018.

\bibitem{IoT1}
Z.~Ma, M.~Xiao, Y.~Xiao, Z.~Pang, H.~V. Poor, and B.~Vucetic,
\newblock ``High-reliability and low-latency wireless communication for
  internet of things: Challenges, fundamentals, and enabling technologies,''
\newblock {\em IEEE Internet of Things Journal}, vol. 6, no. 5, pp. 7946--7970,
  2019.

\bibitem{IoT2}
M.~Zhan, Z.~Pang, D.~Dzung, and M.~Xiao,
\newblock ``Channel coding for high performance wireless control in critical
  applications: Survey and analysis,''
\newblock {\em IEEE Access}, vol. 6, pp. 29648--29664, 2018.

\bibitem{URLLC3}
H.~{Chen}, R.~{Abbas}, P.~{Cheng}, M.~{Shirvanimoghaddam}, W.~{Hardjawana},
  W.~{Bao}, Y.~{Li}, and B.~{Vucetic},
\newblock ``Ultra-reliable low latency cellular networks: Use cases, challenges
  and approaches,''
\newblock {\em IEEE Communications Magazine}, vol. 56, no. 12, pp. 119--125,
  2018.

\bibitem{URLLC2}
G.~{Durisi}, T.~{Koch}, and P.~{Popovski},
\newblock ``Toward massive, ultrareliable, and low-latency wireless
  communication with short packets,''
\newblock {\em Proceedings of the IEEE}, vol. 104, no. 9, pp. 1711--1726, 2016.

\bibitem{Fossorier95}
M.~P.~C. {Fossorier} and {S. Lin},
\newblock ``Soft-decision decoding of linear block codes based on ordered
  statistics,''
\newblock vol. 41, no. 5, pp. 1379--1396, 1995.

\bibitem{OSD3}
C.~Yue, M.~Shirvanimoghaddam, Y.~Li, and B.~Vucetic,
\newblock ``Segmentation-discarding ordered-statistic decoding for linear block
  codes,''
\newblock in {\em 2019 {IEEE} Global Communications Conference, {GLOBECOM}
  2019, Waikoloa, HI, USA, December 9-13, 2019}. 2019, pp. 1--6, {IEEE}.

\bibitem{OSD4}
Johannes~Van Wonterghem, Amira Alloum, Joseph~Jean Boutros, and Marc
  Moeneclaey,
\newblock ``On performance and complexity of osd for short error correcting
  codes in 5g-nr,''
\newblock 2017.

\bibitem{OSD5}
J.V. Wonterghem, A.~Alloum, J.J. Boutros, and M.~Moeneclaey,
\newblock ``On short-length error-correcting codes for 5g-nr,''
\newblock {\em Ad Hoc Networks}, vol. 79, pp. 53--62, 2018.

\bibitem{OSD6}
W.~Jin and M.~Fossorier,
\newblock ``Efficient box and match algorithm for reliability-based
  soft-decision decoding of linear block codes,''
\newblock in {\em 2007 Information Theory and Applications Workshop}, 2007, pp.
  160--169.

\bibitem{Scholl13}
S.~{Scholl}, C.~{Stumm}, and N.~{Wehn},
\newblock ``Hardware implementations of {Gaussian} elimination over {GF(2)} for
  channel decoding algorithms,''
\newblock in {\em 2013 Africon}, 2013, pp. 1--5.

\bibitem{duffy2020ordered}
Ken~R. Duffy,
\newblock ``Ordered reliability bits guessing random additive noise decoding,''
\newblock in {\em ICASSP 2021 - 2021 IEEE International Conference on
  Acoustics, Speech and Signal Processing (ICASSP)}, 2021, pp. 8268--8272.

\bibitem{solomon2020soft}
A.~{Solomon}, K.~R. {Duffy}, and M.~{M{\'e}dard},
\newblock ``Soft maximum likelihood decoding using grand,''
\newblock in {\em ICC 2020 - 2020 IEEE International Conference on
  Communications (ICC)}, 2020, pp. 1--6.

\bibitem{Tal15}
I.~{Tal} and A.~{Vardy},
\newblock ``List decoding of polar codes,''
\newblock {\em IEEE Transactions on Information Theory}, vol. 61, no. 5, pp.
  2213--2226, 2015.

\bibitem{LLR-List}
A.~{Balatsoukas-Stimming}, M.~B. {Parizi}, and A.~{Burg},
\newblock ``{LLR}-based successive cancellation list decoding of polar codes,''
\newblock {\em IEEE Transactions on Signal Processing}, vol. 63, no. 19, pp.
  5165--5179, 2015.

\bibitem{Chandesris}
L.~{Chandesris}, V.~{Savin}, and D.~{Declercq},
\newblock ``Dynamic-scflip decoding of polar codes,''
\newblock {\em IEEE Transactions on Communications}, vol. 66, no. 6, pp.
  2333--2345, 2018.

\bibitem{Ercan_2020}
F.~{Ercan}, T.~{Tonnellier}, N.~{Doan}, and W.~J. {Gross},
\newblock ``Practical dynamic {SC}-flip polar decoders: Algorithm and
  implementation,''
\newblock {\em IEEE Transactions on Signal Processing}, vol. 68, pp.
  5441–5456, 2020.

\bibitem{Berlekamp68}
E.~{Berlekamp},
\newblock ``Nonbinary {BCH} decoding (abstr.),''
\newblock {\em IEEE Transactions on Information Theory}, vol. 14, no. 2, pp.
  242--242, 1968.

\bibitem{Massey69}
J.~{Massey},
\newblock ``Shift-register synthesis and {BCH} decoding,''
\newblock {\em IEEE Transactions on Information Theory}, vol. 15, no. 1, pp.
  122--127, 1969.

\bibitem{Peterson61}
W.~W. {Peterson} and D.~T. {Brown},
\newblock ``Cyclic codes for error detection,''
\newblock {\em Proceedings of the IRE}, vol. 49, no. 1, pp. 228--235, 1961.

\bibitem{RLC1}
Robert~G. Gallager,
\newblock ``Information theory and reliable communication,''
\newblock 1968.

\bibitem{RLC2}
J.T. Coffey and R.M. Goodman,
\newblock ``Any code of which we cannot think is good,''
\newblock {\em IEEE Transactions on Information Theory}, vol. 36, no. 6, pp.
  1453--1461, 1990.

\bibitem{GRANDAB-VLSI}
S.~M. {Abbas}, T.~{Tonnellier}, F.~{Ercan}, and W.~J. {Gross},
\newblock ``High-throughput {VLSI} architecture for {GRAND},''
\newblock in {\em 2020 IEEE Workshop on Signal Processing Systems (SiPS)},
  2020, pp. 1--6.

\bibitem{CRCGrand}
A.~Wei, M.~M{\'e}dard, and K.~R. Duffy,
\newblock ``Crc codes as error correction codes,''
\newblock in {\em ICC 2021 - IEEE International Conference on Communications},
  2021, pp. 1--6.

\bibitem{Duffy20205g}
K.~R. Duffy, A.~Solomon, K.~M. Konwar, and M.~M{\'e}dard,
\newblock ``{5G NR CA-Polar} maximum likelihood decoding by {GRAND},''
\newblock in {\em 2020 54th Annual Conference on Information Sciences and
  Systems (CISS)}. IEEE, 2020, pp. 1--5.

\bibitem{3GPP}
{3GPP},
\newblock ``{NR; Multiplexing and Channel Coding},''
\newblock Tech. {R}ep. TS 38.212, April 2020,
\newblock {Rel. 16.1}.

\bibitem{T14high-speedhardware}
J.~Butler and T.~Sasao,
\newblock ``High-speed hardware partition generation,''
\newblock {\em ACM Transactions on Reconfigurable Technology and Systems}, vol.
  7, pp. 1--17, 01 2014.

\bibitem{sortingAlgo}
Thomas~H. Cormen,
\newblock {\em Algorithms for Sorting and Searching}, pp. 25--59,
\newblock 2013.

\bibitem{batcher68}
K.~E. Batcher,
\newblock ``Sorting networks and their applications,''
\newblock in {\em Proceedings of the April 30--May 2, 1968, Spring Joint
  Computer Conference}, New York, NY, USA, 1968, AFIPS '68 (Spring), p.
  307–314, Association for Computing Machinery.

\end{thebibliography}

\end{document}